\title{Eliminating Crossings in Ordered Graphs}
\author{Akanksha~Agrawal}{Indian Institute of Technology Madras, Chennai, India}{}{0000-0002-0656-7572}{}
\author{Sergio~Cabello}{Faculty of Mathematics and Physics, University of Ljubljana, Ljubljana, Slovenia \and Institute~of~Mathematics, Physics and Mechanics, Ljubljana, Slovenia}{}{0000-0002-3183-4126}{}
\author{Michael~Kaufmann}{Department of Computer Science, Tübingen University, Tübingen, Germany}{}{}{}
\author{Saket~Saurabh}{Institute of Mathematical Sciences, Chennai, India}{}{}{}
\author{Roohani~Sharma}
{University of Bergen, Bergen, Norway}
{}{}{}
\author{Yushi~Uno}{Graduate School of Informatics, Osaka Metropolitan University, Sakai, Japan}{}{}{}
\author{Alexander~Wolff}{Universit\"at W\"urzburg, W\"urzburg,
  Germany}{}{0000-0001-5872-718X}{}
\authorrunning{Agrawal, Cabello, Kaufmann, Saurabh, Sharma, Uno, and Wolff}
\keywords{Ordered graphs, book embedding, edge deletion, $d$-planar, hitting set}
\newcommand{\OO}{\mathcal{O}}
\newcommand{\yes}{\textsc{Yes}}
\newcommand{\arxiv}[2]{#1}
\newcommand{\delpplanar}{\textsc{Edge Deletion to 1-Page \pl-Planar}\xspace}
\newcommand{\delppageplanar}[1]{\textsc{Edge Deletion to $#1$-Page
    Planar}\xspace}
\newcommand{\deldegp}{\textsc{Vertex Deletion to Degree-\pl}\xspace}
\newcommand{\spn}{\textnormal{\texttt{span}}}
\newcommand{\lef}{\textnormal{\texttt{left}}}
\newcommand{\cros}{\textnormal{\texttt{cross}}}
\newcommand{\opt}{\operatorname{OPT}}
\newcommand{\CR}{\operatorname{cr}}
\newcommand{\pg}{\ensuremath{p}\xspace}
\newcommand{\tr}{\ensuremath{t}\xspace}
\newcommand{\pl}{\ensuremath{d}\xspace}
\newcommand{\tw}{\operatorname{tw}}
\newcommand{\pw}{\operatorname{pw}}
\newcommand{\vc}{\operatorname{vc}}
\newcommand{\N}{\ensuremath{\mathcal{N}}\xspace}
\definecolor{defblue}{rgb}{0.1,0.4,0.6} %
\let\emph\relax\DeclareTextFontCommand{\emph}{\color{defblue}\em}
\newlength{\RoundedBoxWidth}
\newsavebox{\GrayRoundedBox}
\newenvironment{GrayBox}[1]%
   {\setlength{\RoundedBoxWidth}{.93\columnwidth}
    \def\boxheading{#1}
    \begin{lrbox}{\GrayRoundedBox}
       \begin{minipage}{\RoundedBoxWidth}}%
   {   \end{minipage}
    \end{lrbox}
    \begin{center}
    \begin{tikzpicture}%
       \node(Text)[draw=black!20,fill=white,rounded corners,inner sep=2ex,text width=\RoundedBoxWidth]
             {\usebox{\GrayRoundedBox}};
        \coordinate(x) at (current bounding box.north west);
        \node [draw=white,rectangle,inner sep=3pt,anchor=north west,fill=white]
        at ($(x)+(6pt,.75em)$) {\boxheading};
    \end{tikzpicture}
    \end{center}}
\newenvironment{defproblemx}[1]{\noindent\ignorespaces%
                                \FrameSep=6pt%
                                \parindent=0pt%
                \vspace*{-.5em}
                \begin{GrayBox}{#1}%
                \begin{tabular*}{\columnwidth}{!{\extracolsep{\fill}}@{\hspace{.1em}} >{\itshape} p{1.5cm} p{0.86\columnwidth} @{}}%
            }{
                \end{tabular*}%
                \end{GrayBox}%
                \ignorespacesafterend
                \vspace*{-.5em}
            }
\newcommand{\parProblemQuestion}[4]{%
  \begin{defproblemx}{#1}
    Input: & #2 \\
    Parameters: & #3 \\
    Question: & #4 \\[-1ex]
  \end{defproblemx}
}
\arxiv{\hideLIPIcs}{
\relatedversion{}
\relatedversiondetails{Full Version}{https://arxiv.org/abs/2403.XXXXX}
}
\begin{document}

\maketitle

\begin{abstract}
  Drawing a graph in the plane with as few crossings as possible is
  one of the central problems in graph drawing and computational
  geometry.  Another option is to remove the smallest number of vertices
  or edges such that the remaining graph can be drawn without
  crossings.  We study both problems in a book-embedding setting for
  \emph{ordered graphs}, that is, graphs with a fixed vertex order.
  In this setting, the vertices lie on a straight line, called the
  \emph{spine}, in the given order, and each edge must be drawn on one
  of several pages of a book such that every edge has at most a fixed
  number of crossings.  In book embeddings, there is another way to
  reduce or avoid crossings; namely by using more pages.  The minimum
  number of pages needed to draw an ordered graph without any
  crossings is its (fixed-vertex-order) \emph{page number}.

  We show that the page number of an ordered graph with $n$ vertices
  and $m$ edges can be computed in $2^m \cdot n^{\OO(1)}$ time.  An
  $\OO(\log n)$-approximation of this number can be computed
  efficiently.  We can decide in
  $2^{\OO(d \sqrt{k} \log (d+k))} \cdot n^{\OO(1)}$ time whether it
  suffices to delete $k$ edges of an ordered graph to obtain a
  \emph{$d$-planar} layout (where every edge crosses at most $d$ other
  edges) on {\em one} page. As an additional parameter, we consider
  the size $h$ of a \emph{hitting set}, that is, a set of points on
  the spine such that every edge, seen as an open interval, contains
  at least one of the points.  For $h=1$, we can efficiently compute
  the minimum number of edges whose deletion yields fixed-vertex-order
  page number~$p$.  For $h>1$, we give an XP algorithm with respect to
  $h+p$.  Finally, we consider \emph{spine+\tr-track drawings}, where
  some but not all vertices lie on the spine.  The vertex order on the
  spine is given; we must map every vertex that does not lie on the
  spine to one of \tr tracks, each of which is a straight line on a
  separate page, parallel to the spine.  In this setting, we can
  minimize in $2^n \cdot n^{\OO(1)}$ time either the number of
  crossings or, if we disallow crossings, the number of tracks.
\end{abstract}

\section{Introduction}
\label{sec:intro}

Many crossings typically make it hard to understand the drawing of a graph, and thus much effort in the area of Graph Drawing has been directed towards reducing the number of crossings in drawings of graphs.
In terms of parameterized complexity, several facets of this problem have been considered.
For example, there are FPT algorithms that, given a graph~$G$ and an integer~$k$, decide whether $G$ can be drawn with at most $k$ crossings \cite{g-ccnqt-JCSS04,kr-ccnlt-STOC07}.
Crossing minimization has also been considered in the setting where each vertex of the given graph must lie on one of two horizontal lines.
This restricted version of crossing minimization is an important subproblem in drawing layered graphs according to the so-called \emph{Sugiyama framework}~\cite{stt-mvuhss-TSMC81}. 
There are two variants of the problem; either the vertices on both lines may be freely permuted or the order of the vertices on one line is given.
These variants are called
two-layer and one-layer crossing minimization, respectively.
For both, FPT algorithms exist \cite{kt-fssfp-Algorithmica15,kt-ffpt2-IPL16}.  Zehavi \cite{z-pacmp-CSR22} has surveyed parameterized approaches to crossing minimization. 

Surprisingly, crossing minimization remains NP-hard even when restricted to graphs that have a planar subgraph with just one edge less \cite{cm-a1e-SICOMP13}.
Another way to deal with crossings is to remove a small number of vertices or edges such that the remaining graph can be drawn without crossings.
In fact, it is known that vertex deletion to planarity is FPT with respect to the number of deleted vertices~\cite{JansenLS14,Kawarabayashi09,MarxS12}.
However, the running times of these algorithms depends at least
exponentially on the number of deleted vertices. On the kernelization
front, there exists an $\OO(1)$-approximate kernel for vertex deletion to
planarity~\cite{DBLP:conf/stoc/Jansen22}, whereas vertex deletion to
outerplanarity is known to admit an (exact) polynomial
kernel~\cite{DonkersJW22}.

In this paper, we focus on another model to cope with the problem of crossing edges, 
namely \emph{book embeddings}, 
drawings where the vertices lie on a straight line, called the \emph{spine},  
and each edge must be drawn on one of several halfplanes, called \emph{pages}, 
such that the drawing on each page is crossing-free (planar) or such that each edge has at most a constant number $c$ of crossings (that is, the drawing is $c$-planar). 
We consider the variant of the problem where the order $\sigma$ of the vertices is given and fixed.
The minimum number of pages to draw an (ordered) graph without any crossings 
is its (fixed-vertex-order) \emph{page number}.

In this paper, we study the problem of designing parameterized algorithms, 
where the possible parameters are the number $k$ of edges to be deleted, the number $c$ of allowed crossings per edge, the number $\pg$ of pages, and their combinations. 

\subparagraph*{Problem description.}

Given a graph $G$, let \emph{$V(G)$} denote the vertex set and \emph{$E(G)$} the edge set of~$G$.  
An \emph{ordered graph} $(G,\sigma)$ consists of a graph $G$ and an
ordering of the vertices of $G$, that is, a bijective map $\sigma\colon V(G) \to \{1,\dots, |V(G)|\}$.  
Henceforth, we specify every edge $(u,v)$ of $(G,\sigma)$ such that $\sigma(u)<\sigma(v)$.
For two edges $e=(u,v)$ and $e'=(u',v')$ of an ordered graph
$(G,\sigma)$, we say that $e$ and $e'$ \emph{cross with respect to
$\sigma$} if their endpoints interleave, that is, if $\sigma(u) < \sigma(u') < \sigma(v) < \sigma(v')$ or if
$\sigma(u') < \sigma(u) < \sigma(v') < \sigma(v)$.  
The ordered graph models the scenario
where the vertices of $G$ are placed along a horizontal line in the given order $\sigma$ and all the edges are drawn above the line
using curves that cross as few times as possible.  Whenever $e$ and
$e'$ cross with respect to $\sigma$, their curves must
intersect. Whenever $e$ and $e'$ do \emph{not} cross with respect to
$\sigma$, their curves can be drawn without intersections; for
example, we may use halfcircles.  In this setting, we get a drawing
such that two edges of $G$ cross precisely if and only if they cross
with respect to $\sigma$.
Given a positive integer $\pl$, we say that an ordered graph
$(G,\sigma)$ is \emph{$\pl$-planar} if every edge in~$G$ is crossed by
at most~$\pl$ other edges (where 0-planar simply means planar).

In this paper, we focus on fast parameterized algorithms for 
the following problem.
\parProblemQuestion{\sc Edge Deletion to \pg-Page \pl-Planar}{%
  An ordered graph $(G,\sigma)$ and positive integers $k$, $\pg$, and
  $\pl$.}{$k$, \pg, \pl}{Does there exist a set $S$ of at most $k$
  edges of $G$ such that $(G-S,\sigma)$ is \pg-page \pl-planar?}

\noindent
We stress that we view $\pg$ and $\pl$, though they appear in the problem name, not as constants, but as parameters.

\subparagraph*{Related work.}

Given an ordered graph~$(G,\sigma)$, its \emph{conflict graph}
$H_{(G,\sigma)}$ is the graph that has a vertex for each edge of~$G$
and an edge for each pair of crossing edges of~$G$.
Note that~$H_{(G,\sigma)}$ is a \emph{circle graph}, that is, 
the intersection graph of chords of a circle, because two chords in a circle
intersect if and only if their endpoints interleave.

We can express \delpplanar as the problem of deleting from~$H_{(G,\sigma)}$ a set of at most $k$ vertices such that the remaining graph has maximum degree at most~$\pl$.
For general graphs, this problem is called \deldegp~\cite{NishimuraRT05}; it admits a quadratic kernel~\cite{FellowsGMN11,Xiao17}.

Testing whether $(G,\sigma)$ has (fixed-vertex-order) page number
$\pg$ (without any edge deletions) is equivalent to the 
$\pg$-colorability of the conflict graph $H_{(G,\sigma)}$. 
For $\pg=2$, it suffices to test whether the conflict graph $H_{(G,\sigma)}$
is bipartite. An alternative approach, discussed by Masuda, Nakajima,
Kashiwabara, and Fujisawa~\cite{MasudaNKF90}, is to add to~$G$
a cycle connecting the vertices along the spine in the given order, 
and then test for planarity.
Another possibility is to use \textsc{2-Sat}.
For $p=4$, Unger~\cite{Unger88} showed that the problem is NP-hard. 
For $p=3$, he~\cite{Unger92} claimed an efficient solution, but
recently his approach was shown to be incomplete~\cite{Bachmann23}.

\delppageplanar{\pg} is the special case where %
$c=0$; it can be interpreted as deletion of as few
vertices as possible in the conflict graph $H_{(G,\sigma)}$ to obtain 
a $\pg$-colorable graph.
For $\pg=1$, the problem can be solved by finding a maximum independent set in a circle graph, which takes linear time \cite{Gavril73,Nash10,Valiente03}; see \cref{lem:fopn_iteration} in \cref{sec:page-number}.
\delppageplanar{2} can be phrased as
{\sc Odd Cycle Transversal} in the conflict graph, which means that
it is FPT with respect to the number of edges that must 
be deleted~\cite{ReedSV04}.
The case $\pg=2$ can also be modeled as a (geometric) special case of
\textsc{Almost 2-Sat (variable)}, which can be solved in
$2.3146^k \cdot n^{\OO(1)}$ time, where $k$ is the number of variables that
need to be deleted so that the formula becomes satisfiable
\cite[Corollary~5.2]{lokshtanov14}. 

Masuda et al.~\cite{MasudaNKF90} showed that the problem
\textsc{Fixed-Order 2-Page Crossing Number} is NP-hard.  In this problem, we have to decide, for each edge of the given ordered graph $(G,\sigma)$, whether to draw it above or below the spine, so as to minimize the number of crossings.  

Bhore, Ganian, Montecchiani, and N{\"{o}}llenburg~\cite{BhoreGMN20}studied
the fixed-vertex-order page number and 
provide an algorithm to compute it with running time $2^{\OO(\vc^3)}n$, where $\vc$ is the vertex cover number of the graph.  
They also proved that the problem is fixed-parameter tractable
parameterized by the pathwidth ($\pw$) of the {\em ordered} graph, with a
running time of $\pw^{\OO(\pw^2)}n$.
Note that the pathwidth of an ordered graph is in general not bounded by the vertex cover number~\cite{BhoreGMN20}.
This has been improved by Liu, Chen, Huang, and Wang~\cite{LiuCHW21} to $2^{\OO(\pw^2)}n$.
They also showed that the problem does not admit a polynomial kernel if parameterized only by~$\pw$ (unless NP $\subseteq$ coNP/poly).
Moreover, they gave an algorithm that checks in $(\CR + \,2)^{O(\pw^2)}n$ time whether a graph with $n$ vertices and pathwidth~$\pw$ can be drawn on a given number of pages with at most $\CR$ crossings in total.

Liu, Chen and Huang~\cite{LiuCH20} considered the problem \textsc{Fixed-Order Book Drawing} with bounded number of crossings per edge: decide if there is a $p$-page book-embedding of $G$ such that the maximum number of crossings per edge is upper-bounded by an integer $\pl$. 
This problem was posed by Bhore et al.~\cite{BhoreGMN20}. 
Liu et al.\ showed that this problem, when parameterized by both the maximum number~\pl of crossings per edge and the vertex cover number~$\vc$ of the graph, admits an algorithm running in $(\pl+2)^{\OO(\vc^3)}n$ time.
They also showed that the problem, when parameterized by both~\pl and the pathwidth~$\pw$ of the vertex ordering, admits an algorithm running in $(\pl+2)^{\OO(\pw^2)}n$ time.

All these problems can be considered also in the setting where we can choose the ordering of the vertices along the spine; see, for instance,~\cite{BhoreGMN20,chung1987embedding}.

\subparagraph*{Our contribution.}

For an overview over our results and known results, see \cref{tab:results}.
\begin{table}[tb]
  \caption{New and known results concerning \textsc{Edge Deletion to \pg-Page \pl-Planar}.}
  \label{tab:results}
  \centering\small
  \begin{tabular}{@{}cccccll@{}}
    \toprule
    $k$ & \pg & \pl & add. param. & ref. & \multicolumn{2}{l}{result (runtime, ratio, or kernel size)} \\ \midrule
    0 & $\min$ & 0 & -- & Cor.~\ref{cor:fopn_exact} & EXP: & $2^m n^{\OO(1)}$ \\
    0 & $\min$ & 0 & -- & Thm.~\ref{thm:fopn_approx} & approx: & ratio $\OO(\log n)$ \\
    0 & $\min$ & param. & -- & Cor.~\ref{cor:focppn-approx} & approx: & ratio $\OO((d+1)\log n)$ \\ \midrule
    param. & 1 & param. & -- & Thm.~\ref{thm:sub-exp} &  FPT: & $2^{\OO(\pl\sqrt{k} \log (\pl+k))} \cdot n^{\OO(1)}$ \\ \midrule
    $\min$ & param. & 0 & -- & Sect.~\ref{sec:del-P-page-planar} & EXP: & $4^m n^{\OO(1)}$ \\
    $\min$ & param. & 0 & $h=1$ & Thm.~\ref{thm:1-hitting-set} & P: & $\OO(m^3 \log n \log \log \pg)$ \\
    $\min$ & param. & 0 & $h$ & Thm.~\ref{thm:hitting-xp} & XP: & $\OO(m^{(4h-2)\pg+3} \log n \log \log \pg)$ \\ \midrule
    0 & -- & $\min$ & $t$ & Thm.~\ref{thm:k_tracks} & EXP: & $2^n n^{\OO(1)}$ \\
    0 & -- & $\min$ & $\min t$ & Cor.~\ref{cor:min-track} & EXP: & $2^n n^{\OO(1)}$ \\ \midrule
    param. & 1 & param. & -- & \cite{FellowsGMN11,Xiao17} & kernel: & quadratic \\
    0 & $\ge 4$ & 0 & -- & \cite{Unger88} & NPC. & \\
    0 & $\le 2$ & 0 & -- & folklore & P: & linear time; e.g., via \textsc{2-Sat} \\
    $\min$ & 1 & 0 & -- & e.g., \cite{Gavril73} & P: & linear time \\
    param. & 2 & 0 & -- & \cite{ReedSV04} & FPT: & \textsc{Odd Cycle Transversal} \\
    0 & 2 & $\min$ & -- & \cite{MasudaNKF90} & NPC. & \\
    0 & $\min$ & 0 & $\vc$ & \cite{LiuCH20} & FPT: & $(\pl+2)^{\OO(\vc^3)}n$ \\
    0 & $\min$ & 0 & $\pw$ & \cite{LiuCH20} & FPT: & $(\pl+2)^{\OO(\pw^2)}n$ \\
    0 & param. & $\CR$ & $\pw$ & \cite{LiuCHW21} & FPT: & $n \cdot (\CR + 2)^{\OO(\pw^2)}$ \\
    0 & param. & param. & $\pw$ & \cite{LiuCHW21} & FPT: & $2^{\OO(\pw^2)}n$; no poly.\ $\pw$-kernel \\
    \bottomrule
  \end{tabular}
\end{table}
First, we show that the fixed-vertex-order page number of an ordered
graph with $m$~edges and $n$~vertices can be computed in $2^m \cdot n^{\OO(1)}$ time; see
\cref{sec:page-number}.  
We use subset convolution~\cite{BjorklundHKK07}. 
Alternatively, given a budget $p$ of pages, we can compute a $p$-page
book embedding with the minimimum number of crossings. %
By combining the greedy algorithm
for \textsc{Set Cover} with an efficient algorithm for \textsc{Maximum
  Independent Set} in circle graphs \cite{Gavril73,Nash10,Valiente03},
we obtain an efficient $\OO((\pl+1) \log n)$-approxi\-ma\-tion
algorithm for the fixed-vertex-order \pl-planar page number.

Second, we tackle \delpplanar; see \cref{sec:sub-exp}.
We show how to decide in $2^{\OO(c\sqrt{k} \log (c+k))} \cdot n^{\OO(1)}$ time whether deleting $k$ edges of an ordered graph suffices to obtain a $c$-planar layout 
on {\em one} page.  Note that our algorithm is subexponential in~$k$.

Third, we consider the problem \delppageplanar{\pg}; see \cref{sec:del-P-page-planar}. 
As an additional parameter, we consider the size~$h$ 
of a \emph{hitting set}, that is, a set of points on the spine such 
that every edge, seen as an open interval, contains at least one of 
the points.
For $h=1$, we can efficiently compute the smallest set of edges whose deletion yields fixed-vertex-order
page number~$\pg$.  
For $h>1$, we give an XP algorithm with respect to $h+p$.

Finally, we consider \emph{spine+\tr-track 
drawings}; see \cref{sec:multiple-track}.
In such drawings, some but not all vertices lie on the spine.
The vertex order on the spine is 
again given, but now we must map every vertex that does not lie on 
the spine to one of \tr tracks, each of which is a straight line on a separate page, parallel to the spine.
Using subset convolution, we can minimize in $2^n \cdot n^{\OO(1)}$ time either 
the number of crossings or, if we disallow crossings,
the number of tracks.  

We close with some open problems; see \cref{sec:open}.

\section{Computing the Fixed-Vertex-Order Page Number}
\label{sec:page-number}

Let $(G,\sigma)$ be an ordered graph, 
and let \pg be a positive integer.
In this section, we consider \emph{$\pg$-page book-embeddings} of 
$(G,\sigma)$: the vertices of $G$ are placed on a \emph{spine} $\ell$
according to~$\sigma$, there are \pg \emph{pages} (halfplanes) sharing $\ell$ on their boundary, and for each edge we have
to decide on which page it is drawn.
The aim is to minimize the total number
of crossings for a given number of pages, or minimize the number of pages 
to attain no crossings; see 
\cref{fig:pages}.

\begin{figure}[b]
    \centering
	\includegraphics[page=4]{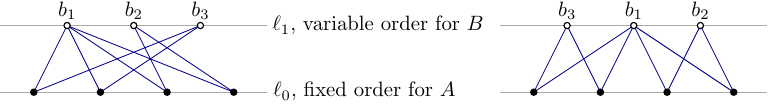}
	\caption{A 3-page book embedding of $K_5$ with fixed vertex order.
	For each edge, we can choose on which page it is drawn.  Note that
    $K_5$ cannot be drawn on two pages without crossings.
    }
	\label{fig:pages}
\end{figure}

Let $\CR_\pg(G,\sigma)$ be the minimum number of crossings
over all possible assignments of the edges of $E(G)$ to the \pg pages.
As discussed in the introduction, we can decide in linear
time whether $\CR_2(G,\sigma)=0$, but in general, 
computing $\CR_2(G,\sigma)$ is NP-hard~\cite{MasudaNKF90}.
The fixed-vertex-order page number of $(G,\sigma)$ is the minimum $\pg$ such that $\CR_\pg(G,\sigma)=0$.

\begin{theorem}
\label{thm:fopn_exact}
  Given an ordered graph~$(G,\sigma)$ with $n$~vertices and $m$~edges, 
  and a positive integer $\pg$, 
  we can compute the values $\CR_1(G,\sigma),\dots,\CR_\pg(G,\sigma)$ 
  in $2^m \cdot n^{\OO(1)}$ time.
  In particular, given a budget $p$ of pages, we can compute a $p$-page
  book embedding with the minimum number of crossings within the given time bound.
\end{theorem}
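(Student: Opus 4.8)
The plan is to phrase the task as a minimum-cost set-partition problem over the universe $U = E(G)$ of edges and to solve it with iterated min-sum subset convolution. For a subset $S \subseteq E(G)$, let $c(S)$ be the number of pairs of edges in $S$ that cross with respect to $\sigma$, i.e.\ the number of interleaving pairs inside $S$. Recall from the introduction that edges placed on distinct pages lie in distinct halfplanes and hence never cross, whereas on a single page we can realize exactly the interleaving crossings (drawing each edge as a halfcircle). Therefore, for any assignment of $E(G)$ to $\pg$ pages $S_1, \dots, S_\pg$ the total number of crossings is $\sum_{i=1}^{\pg} c(S_i)$, and
$$\CR_\pg(G,\sigma) \;=\; \min_{S_1 \sqcup \cdots \sqcup S_\pg = E(G)} \; \sum_{i=1}^{\pg} c(S_i),$$
where the minimum is over ordered partitions of $E(G)$ into $\pg$ (possibly empty) parts; since an empty page contributes $c(\emptyset)=0$, this equals the minimum over partitions into at most $\pg$ nonempty parts.

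First I would precompute $c(S)$ for all $2^m$ subsets $S \subseteq E(G)$; each value is obtained in $\OO(m^2)$ time by counting interleaving pairs, so this preprocessing fits in $2^m \cdot n^{\OO(1)}$ time. Next, define the min-sum subset convolution of two functions $g,h \colon 2^{E(G)} \to \mathbb{Z}_{\ge 0}$ by $(g * h)(S) = \min_{T \subseteq S} \big( g(T) + h(S \setminus T) \big)$. Setting $D_1 = c$ and $D_q = D_{q-1} * c$, an easy induction shows that $D_q(S)$ is the minimum number of crossings over all assignments of $S$ to $q$ pages; in particular $\CR_q(G,\sigma) = D_q(E(G))$. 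Thus it suffices to compute the first $\pg$ convolution powers of $c$ and read off their value at $E(G)$.

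The one technical point is to evaluate each convolution in $2^m \cdot n^{\OO(1)}$ rather than the trivial $3^m$ time, and this is where I expect the only real obstacle to lie. Here I would invoke fast subset convolution~\cite{BjorklundHKK07}. Since that algorithm is stated for the sum-product ring, I would pass to the min-sum semiring by the standard embedding that replaces an integer value $v$ by the monomial $z^v$ and carries out the ring subset convolution over $\mathbb{Z}[z]$ truncated at degree $M := \binom{m}{2}$; the min-sum value of a set is then recovered as the lowest exponent with a nonzero coefficient. Because the total number of crossings never exceeds $M = \OO(m^2) = n^{\OO(1)}$, the polynomials have polynomially bounded degree, so each of the $\pg - 1 \le m$ convolutions costs $2^m \cdot n^{\OO(1)}$ and the whole computation stays within the claimed bound; getting the reduction to a ring together with this degree bound right is the crux, everything else being bookkeeping. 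Finally, to output an optimal $\pg$-page embedding (not merely its cost), I would keep the full tables $D_1, \dots, D_\pg$ and reconstruct the pages one at a time: having fixed the last pages and letting $R$ be the set of edges still to be placed on $q$ pages, choose the next page $T \subseteq R$ attaining $D_q(R) = c(T) + D_{q-1}(R \setminus T)$; scanning all candidate subsets over all peels adds only $2^m \cdot n^{\OO(1)}$ time.
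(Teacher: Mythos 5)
Your proposal is correct and follows essentially the same route as the paper's proof: precompute the one-page cost $\CR_1((V,F),\sigma)=c(F)$ for all $2^m$ subsets, then iterate min-sum subset convolution of this function with itself, invoking the fast algorithm of Bj\"orklund et al.~\cite{BjorklundHKK07} with the value bound $\OO(m^2)$ (your explicit $z^v$-embedding into a truncated polynomial ring is exactly the reduction the paper delegates to that citation, and your peeling step spells out the reconstruction the paper leaves implicit). The only detail to add is the paper's opening remark that one may assume $\pg < m$, since $\CR_q(G,\sigma)=0$ for $q \ge m$ (each edge on its own page); this is what justifies your assertion that at most $m$ convolutions suffice even when the input $\pg$ is large.
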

\begin{proof}
  Consider a fixed-vertex-order graph $((V,E),\sigma)$ with $n$
  vertices and $m$ edges.  We need to consider only the case $\pg < m$
  because, for $\pg \ge m$, it obviously holds that
  $\CR_{\pg}((V,E),\sigma)=0$.
  
  First note that, for any fixed $F\subseteq E$, 
  we can easily compute $\CR_{1}((V,F),\sigma)$ in 
  $\OO(|F|^2)=\OO(m^2)$ time by checking the order of the endpoints
  of each pair of edges.
  It follows that we can compute $\CR_{1}((V,F),\sigma)$ for all subsets 
  $F\subseteq E$ in $2^m \cdot n^{\OO(1)}$ time.
  
  For every $q>1$ and every $F\subseteq E$, we have the recurrence 
  \begin{equation}\label{eq:rec1}
	\CR_{q}((V,F),\sigma) ~=~ \min 
        \left\{ \CR_{1}((V,F'),\sigma) + \CR_{q-1}((V,F\setminus F'),\sigma) \mid 
        F'\subseteq F \right\}.
  \end{equation}
  Here, $F'\subseteq F$ corresponds to the edges that in the drawing
  go to one page, and thus $F\setminus F'$ goes to the remaining $q-1$ pages,
  where we can optimize over all choices of $F'\subseteq F$.
  
  From the recurrence in \cref{eq:rec1} we see that, for $q>1$, the function
  $F\mapsto \CR_{q}((V,F),\sigma)$ is, by definition, 
  the \emph{subset convolution} of the functions $F\mapsto \CR_{1}((V,F),\sigma)$ 
  and $F\mapsto \CR_{q-1}((V,F),\sigma)$ in the $(\min,+)$ ring. 
  Since $\CR_{q}((V,F),\sigma)$ takes integer values from $\{0,\dots,m^2 \}$ for
  every $q$ and $F$,
  it follows from~\cite{BjorklundHKK07} that one can obtain 
  $\CR_{q}((V,F),\sigma)$ for all $F\subseteq E$ in $2^m \cdot n^{\OO(1)}$ time,
  for a fixed $q>1$,
  assuming that $\CR_{1}((V,F),\sigma)$ and $\CR_{q-1}((V,F),\sigma)$ are already available. 
  Therefore, we can compute the values $\CR_{q}((V,F),\sigma)$ for $q \in \{2,\dots,\pg\}$ in
  $2^m \cdot n^{\OO(1)}$ time since $\pg \le m<n^2$.
\end{proof}

\begin{corollary}
\label{cor:fopn_exact}
  The fixed-vertex-order page number of a graph with 
  $n$~vertices and 
  $m$~edges can be computed in $2^m\cdot n^{\OO(1)}$ time. 
\end{corollary}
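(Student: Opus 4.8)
The plan is to read off the page number directly from the crossing numbers supplied by \cref{thm:fopn_exact}, since by definition the fixed-vertex-order page number is the smallest \pg for which $\CR_\pg(G,\sigma)=0$.

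First I would bound the range in which this smallest \pg can lie. The page number never exceeds $m$: placing each of the $m$ edges on its own page yields a book embedding in which no two edges share a page, hence $\CR_m(G,\sigma)=0$. It therefore suffices to inspect the finite list of values $\CR_1(G,\sigma),\dots,\CR_m(G,\sigma)$.

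Next, I would invoke \cref{thm:fopn_exact} with the budget $\pg:=m$, which computes the whole list $\CR_1(G,\sigma),\dots,\CR_m(G,\sigma)$ in $2^m\cdot n^{\OO(1)}$ time. The page number is then the smallest index $q$ with $\CR_q(G,\sigma)=0$, which can be found by a single linear scan of the list in $\OO(m)$ additional time. Since $m<n^2$, this scan is absorbed into the $n^{\OO(1)}$ factor, so the overall running time remains $2^m\cdot n^{\OO(1)}$, as claimed.

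I do not expect any genuine obstacle beyond \cref{thm:fopn_exact} itself; the only point worth verifying is that such a threshold index $q$ is well defined. This follows because $\CR_\pg(G,\sigma)$ is non-increasing in \pg: any $(\pg-1)$-page embedding can be viewed as a \pg-page embedding that leaves one page empty, so adding a page can never create new crossings. Combined with the bound $\CR_m(G,\sigma)=0$, this guarantees that the smallest $q$ with $\CR_q(G,\sigma)=0$ exists and coincides with the page number.
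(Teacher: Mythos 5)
Your proposal is correct and matches the paper's (implicit) derivation of \cref{cor:fopn_exact} from \cref{thm:fopn_exact}: invoke the theorem with budget $\pg=m$ (the paper's proof of the theorem likewise notes that $\CR_{\pg}(G,\sigma)=0$ for $\pg\ge m$) and take the smallest $q$ with $\CR_q(G,\sigma)=0$. Your extra check that $\CR_\pg(G,\sigma)$ is non-increasing in $\pg$ is a nice touch of rigor, though the bound $\CR_m(G,\sigma)=0$ alone already suffices to terminate the scan at the true page number.
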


\begin{lemma}
\label{lem:fopn_iteration}
  Given an ordered graph $(G,\sigma)$, we can compute in polynomial time
  a smallest subset $S \subseteq E(G)$ such that $\CR_{1}(G-S,\sigma)=0$.
\end{lemma}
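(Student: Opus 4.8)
The plan is to reduce the problem to computing a maximum independent set in the conflict graph $H_{(G,\sigma)}$. First I would record the key equivalence: since two edges drawn on a single page cross precisely when they cross with respect to $\sigma$, we have $\CR_1(G-S,\sigma)=0$ if and only if the retained edge set $E(G)\setminus S$ consists of pairwise non-crossing edges. Consequently, a set $S$ of minimum size with $\CR_1(G-S,\sigma)=0$ is exactly the complement of a largest set of pairwise non-crossing edges of $(G,\sigma)$; that is, $S=E(G)\setminus I$, where $I$ is as large as possible subject to no two of its edges crossing.

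Next I would translate the condition ``$I$ contains no two crossing edges'' into the language of the conflict graph. By definition, $H_{(G,\sigma)}$ has one vertex per edge of $G$ and an edge between two vertices exactly when the corresponding edges of $G$ cross with respect to $\sigma$. Hence a set of pairwise non-crossing edges of $(G,\sigma)$ is precisely an independent set of $H_{(G,\sigma)}$, and maximizing $|I|$ amounts to finding a maximum independent set in $H_{(G,\sigma)}$.

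The final step is to exploit the structure of $H_{(G,\sigma)}$. As noted in the introduction, $H_{(G,\sigma)}$ is a circle graph: placing the vertices of $G$ in the order $\sigma$ on a circle turns each edge into a chord, and two chords intersect if and only if their endpoints interleave. Although maximum independent set is NP-hard in general, on circle graphs it is solvable in polynomial (indeed linear) time~\cite{Gavril73,Nash10,Valiente03}. Running such an algorithm on $H_{(G,\sigma)}$, which has $m$ vertices, yields a maximum independent set $I$ in polynomial time, and returning $S=E(G)\setminus I$ gives the desired smallest deletion set.

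I expect the only genuine obstacle to be the observation that we are not facing the general (intractable) maximum independent set problem but its restriction to circle graphs, where tractability is guaranteed. If one preferred a self-contained argument instead of invoking a circle-graph subroutine as a black box, the alternative would be a direct interval dynamic program over the $n$ spine positions: for each sub-interval one records the maximum number of mutually non-crossing edges with both endpoints inside it, branching on whether the leftmost position carries a selected edge and, if so, to which endpoint it connects. Some care would be needed to handle edges sharing an endpoint (which never cross) correctly, but this adds only bookkeeping while keeping the running time polynomial.
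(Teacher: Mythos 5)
Your proposal is correct and follows essentially the same route as the paper: reducing the problem to a maximum independent set in the conflict graph $H_{(G,\sigma)}$, observing that it is a circle graph, and invoking the known polynomial-time algorithms~\cite{Gavril73,Nash10,Valiente03}. The extra sketch of a direct interval dynamic program is a fine alternative but not needed.
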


\begin{proof}
  Consider the \emph{conflict graph} $H_{(G,\sigma)}$ of $(G,\sigma)$,
  already defined in the Introduction.
  Note that $H_{(G,\sigma)}$ is a circle graph.
  Therefore, a largest independent set in $H_{(G,\sigma)}$
  corresponds to a largest subset $F$ of edges with $\CR_{1}((V,F),\sigma)=0$,
  which corresponds to a minimum set $S \subseteq E(G)$ such that
  $\CR_{1}(G-S,\sigma)=0$.
  Finally, note that a largest independent set in circle graphs 
  can be computed in polynomial time~\cite{Gavril73,Nash10,Valiente03}.
\end{proof}

\begin{theorem}
  \label{thm:fopn_approx}
  We can compute an $\OO(\log n)$-approximation to the fixed-vertex-order 
  page number of a graph with $n$ vertices in polynomial time. 
\end{theorem}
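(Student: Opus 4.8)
The plan is to exploit the identity, already noted in the Introduction, that the fixed-vertex-order page number of $(G,\sigma)$ equals the chromatic number $\chi$ of its conflict graph $H := H_{(G,\sigma)}$: a page is a set of pairwise non-crossing edges of $G$, i.e., an independent set of $H$, so a crossing-free $\pg$-page book embedding is exactly a proper $\pg$-coloring of $H$. Since $H$ is a circle graph and a maximum independent set of a circle graph can be computed in polynomial time (the ingredient already used in \cref{lem:fopn_iteration}), I would run the classical greedy ``cover by independent sets'' heuristic: repeatedly compute a maximum independent set $I$ of the current graph, assign all edges in $I$ a fresh page, delete $I$, and recurse until no vertices of $H$ remain.

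First I would check that the procedure is well defined and polynomial. Every induced subgraph of a circle graph is again a circle graph (restrict the chord diagram to the surviving chords), so each iteration computes a maximum independent set of a circle graph in polynomial time, and there are at most $m \le \binom{n}{2}$ iterations. Hence the whole algorithm runs in polynomial time, and the only nontrivial algorithmic component, exact maximum independent set in circle graphs, is imported from \cref{lem:fopn_iteration}.

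The heart of the argument is the approximation analysis. Suppose $t$ vertices of $H$ survive at the beginning of some iteration. The surviving graph is an induced subgraph of $H$, so its chromatic number is at most $\chi$, and therefore its vertex set partitions into at most $\chi$ independent sets; the largest of these has size at least $t/\chi$. Because we extract a \emph{maximum} (not merely a maximal) independent set, the iteration removes at least $t/\chi$ vertices, leaving at most $t(1-1/\chi)$ of them. Starting from $m$ vertices, after $j$ iterations at most $m(1-1/\chi)^j$ vertices remain, which is below $1$ once $j > \chi \ln m$. Thus the heuristic uses $\OO(\chi \log m)$ pages, and since $m \le \binom{n}{2}$ gives $\log m = \OO(\log n)$, the number of pages is $\OO(\chi \log n)$, an $\OO(\log n)$-approximation of the page number $\chi$.

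I expect the difficulty to lie in the analysis rather than the algorithm: the crucial observation is simply that using a maximum independent set forces the $(1-1/\chi)$ shrinkage per step that drives the standard greedy set-cover bound, whereas an arbitrary maximal independent set would not. The bound is stated relative to $\chi$, which is legitimate because $\chi$ is exactly the optimum page number, so $\OO(\chi \log n)$ is precisely an $\OO(\log n)$-factor guarantee.
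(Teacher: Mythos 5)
Your proposal is correct and is essentially the paper's proof: both run the same greedy algorithm that repeatedly extracts a maximum independent set of the (circle) conflict graph via \cref{lem:fopn_iteration} and assigns it to a fresh page, which the paper phrases as the greedy \textsc{Set Cover} heuristic on the family of crossing-free edge sets. The only difference is presentational --- you re-derive the $\OO(\log m)$ guarantee from scratch via the $(1-1/\chi)$ shrinkage argument (using that the page number equals the chromatic number of the conflict graph), whereas the paper cites the standard set-cover bound $i^\star \le \opt\cdot \log|E|$.
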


\begin{proof}
  Let $((V,E),\sigma)$ be the given ordered graph, and 
  let $\opt$ be its fixed-vertex-order page number. 
  Define the family $\mathcal{F}= \{ F\subseteq E\mid \CR_{1}((V,F),\sigma)=0\}$.
  Consider the \textsc{Set Cover} instance $(E,\mathcal{F})$, where $E$ is the universe
  and $\mathcal{F} \subseteq 2^E$ is a family of subsets of $E$.
  A feasible solution of this \textsc{Set Cover} instance is a subfamily 
  $\mathcal{F}'\subseteq \mathcal{F}$ such that $\bigcup\mathcal{F}' = E$.
  The task in \textsc{Set Cover} is to find a feasible solution of minimum cardinality.

  Each feasible solution $\mathcal{F}'$ to the \textsc{Set Cover} 
  instance $(E,\mathcal{F})$ corresponds to a fixed-vertex-order 
  drawing of $(G,\sigma)$ with $|\mathcal{F}'|$ pages. %
  Similarly, each fixed-vertex-order drawing of $(G,\sigma)$ 
  with $\pg$ pages represents a feasible solution to \textsc{Set Cover}
  with $\pg$ sets. %
  In particular, the size of the optimal solution to the {\sc Set Cover} 
  instance $(E,\mathcal{F})$ is equal to $\opt$, the fixed-vertex-order 
  page number of $(G,\sigma)$.
  
  Consider the usual greedy algorithm for \textsc{Set Cover}, which works as follows.
  Set $E_1=E$ and $i=1$.
  While $E_i\neq \emptyset$, we set $F_i$ to be the element of~$\mathcal{F}$
  that contains the largest number of edges from $E_i$,
  increase $i$, and set $E_i=E_{i-1}\setminus F_{i-1}$.
  Let $i^\star$ be the maximum value of $i$ with $E_i \ne \emptyset$.  Thus
  $E_{i^\star+1}=\emptyset$, and the algorithm finishes.
  It is well known that $i^\star \le \opt \cdot \log |E|$; 
  see for example~\cite[Section~5.4]{DasguptaPV-book}.
  Therefore, this greedy algorithm yields an
  $\OO(\log |V|)$-approximation for our problem.
  
  Finally, note that the greedy algorithm can be implemented to run efficiently. 
  Indeed, $F_i$ can be computed from $E_i$
  in polynomial time because of \cref{lem:fopn_iteration},
  and the remaining computations in every iteration are trivially done 
  in polynomial time. The number of iterations is polynomial because $i^\star \le |E|$.
\end{proof}

\begin{corollary}
  \label{cor:focppn-approx}
  We can compute an $\OO((\pl+1) \log n)$-approximation to the
  fixed-vertex-order $\pl$-planar page number of a graph with $n$
  vertices in polynomial time.
\end{corollary}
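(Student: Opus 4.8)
The plan is to avoid generalizing the \textsc{Set Cover} scheme of \cref{thm:fopn_approx} directly, and instead to reduce the $\pl$-planar page number back to the ordinary (that is, $0$-planar) page number. Write $\opt_0$ for the fixed-vertex-order page number of $(G,\sigma)$ (the quantity approximated in \cref{thm:fopn_approx}) and $\opt_\pl$ for its fixed-vertex-order $\pl$-planar page number. Since every $0$-planar page is in particular $\pl$-planar, any $p$-page $0$-planar embedding is also a valid $p$-page $\pl$-planar embedding; the $\OO(\log n)$-approximation of \cref{thm:fopn_approx} therefore already outputs a feasible $\pl$-planar solution. It remains only to bound its size against $\opt_\pl$ rather than against $\opt_0$, and for this I would establish the structural inequality $\opt_0 \le (\pl+1)\,\opt_\pl$.

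To prove this inequality, fix an optimal $\pl$-planar embedding using $\opt_\pl$ pages and consider a single page, say with edge set $F \subseteq E$. By definition of $\pl$-planarity, every edge of $F$ crosses at most $\pl$ other edges of $F$, so the conflict graph $H_{(V,F),\sigma}$ --- an induced subgraph of the circle graph $H_{(G,\sigma)}$ --- has maximum degree at most $\pl$. A graph of maximum degree at most $\pl$ is properly $(\pl+1)$-colorable (greedily), and each color class is an independent set of $H_{(V,F),\sigma}$, hence a set of pairwise non-crossing edges, i.e.\ a $0$-planar page. Thus each of the $\opt_\pl$ pages can be split into at most $\pl+1$ planar pages, which yields a $0$-planar embedding on at most $(\pl+1)\,\opt_\pl$ pages and therefore $\opt_0 \le (\pl+1)\,\opt_\pl$.

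Combining the two ingredients finishes the argument: running the algorithm of \cref{thm:fopn_approx} produces, in polynomial time, a $0$-planar embedding on at most $\OO(\opt_0 \log n)$ pages, which is a feasible $\pl$-planar embedding on at most $\OO((\pl+1)\,\opt_\pl \log n)$ pages; since this is within an $\OO((\pl+1)\log n)$ factor of $\opt_\pl$, we obtain the claimed approximation. The main thing to get right is the reduction itself: the tempting route --- mimicking \cref{thm:fopn_approx} with sets $\{F \subseteq E : (V,F,\sigma)\text{ is }\pl\text{-planar}\}$ --- founders because the corresponding greedy step would require computing a maximum induced subgraph of bounded degree $\pl$ in a circle graph, which (unlike maximum independent set, the case $\pl=0$ handled by \cref{lem:fopn_iteration}) we have no polynomial algorithm for. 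The coloring argument sidesteps this entirely, at the cost of only the $(\pl+1)$ factor already present in the target ratio.
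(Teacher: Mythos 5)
Your proposal is correct and takes essentially the same route as the paper: both reduce to \cref{thm:fopn_approx} via the structural inequality $\opt_0 \le (\pl+1)\,\opt_\pl$, obtained by splitting each $\pl$-planar page into at most $\pl+1$ crossing-free pages. The only difference is in how that split is proved --- you greedily $(\pl+1)$-color the maximum-degree-$\pl$ conflict graph of a page, while the paper inducts on $\pl$ by peeling off a maximal crossing-free subset of the edges with exactly $\pl$ crossings --- two standard arguments for the same coloring fact.
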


\begin{proof}
  Consider first an ordered graph $(H,\sigma)$ that is $\pl$-planar if
  drawn on a single page, with $\pl>0$.  Let $F_\pl$ be the subset
  of~$E(H)$ such that each edge in $F_c$ participates in exactly \pl
  crossings, and let $S_\pl$ be a maximal subset of $F_\pl$ such that
  no two edges in~$S_\pl$ cross each other.  Then, $(H-S_\pl,\sigma)$
  is $(\pl-1)$-planar because each edge of $H$ has fewer than $\pl$
  crossings, is in $S_c$, or is crossed by some edge in~$S_\pl$.  It
  follows by induction that $(H,\sigma)$ can be embedded in $\pl+1$
  pages without crossings.

  Consider now the input ordered graph $(G,\sigma)$ and
  let $\opt_\pl$ be the minimum number of $\pl$-planar pages
  needed for $(G,\sigma)$.  By the argument before applied to each page, 
  we know that the minimum number of planar pages, $\opt_0$, is 
  at most $(\pl+1) \opt_\pl$.  Using \cref{thm:fopn_approx}, 
  we obtain a drawing of $(G,\sigma)$ without crossings with at most 
  $\opt_0 \cdot \, \OO(\log n) \le (\pl+1)\opt_\pl \cdot \, \OO(\log n)$ (planar)
  pages, where $n=|V(G)|$.  Such a drawing is of course also $\pl$-planar.
\end{proof}

\section{Edge Deletion to 1-Page \boldmath{\pl}-Planar}
\label{sec:sub-exp}

The main result of this section is as follows.

\begin{theorem}\label{thm:sub-exp}
  \delpplanar\ admits an algorithm with running time
  $2^{\OO(\pl\sqrt{k} \log (\pl+k))} \cdot n^{\OO(1)}$, where $n$ is
  the number of vertices in the input graph and $k$ is the number of
  edges to be deleted.
\end{theorem}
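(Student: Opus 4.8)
The plan is to work in the conflict graph $H_{(G,\sigma)}$, which is a circle graph, and to use that \delpplanar (on one page) is exactly \deldegp on $H_{(G,\sigma)}$: delete at most $k$ vertices so that the remaining graph has maximum degree at most~$\pl$. First I would tame the degrees with reduction rules. If an edge $e$ of $G$ is crossed by more than $\pl+k$ other edges, then $e$ cannot survive, because with a budget of $k$ we could never destroy all but $\pl$ of its crossings; hence $e$ is forced into the deletion set. Applying this rule exhaustively costs at most $k$ deletions and leaves every vertex of the conflict graph with degree at most $\pl+k$. I would then discard every edge that has at most $\pl$ crossings and is not crossed by any remaining \emph{bad} edge (an edge of conflict-degree larger than~$\pl$), since such an edge neither violates the degree constraint nor helps repair a violation. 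After this preprocessing, the instance is concentrated on the bad edges and their crossing partners.

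The square root in the exponent should come from a treewidth bound on this relevant part. In a \yes-instance a surplus-counting argument bounds things: each surviving bad edge must lose at least (its conflict-degree minus~$\pl$) crossings to the $k$ deleted edges, so the total surplus $\sum_e (\deg_H(e)-\pl)^+$, and hence the number of bad edges and the number of crossings incident to them, is bounded polynomially in $\pl$ and $k$. I would then planarize the one-page drawing of the relevant part (replacing each crossing by a dummy vertex) to obtain a planar graph and invoke the planar separator theorem, which, given a bound of $\OO(\pl^2 k)$ on the number of relevant crossings, should yield a tree decomposition of the relevant conflict graph of width $\OO(\pl\sqrt k)$.

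Given such a decomposition, I would finish with a dynamic program for \deldegp over it. A state records, for each vertex in the current bag, whether the corresponding edge is deleted and, if not, a counter in $\{0,\dots,\pl\}$ of how many of its already-processed crossing partners have been deleted; this yields $(\pl+2)^{\OO(\mathrm{tw})}$ states per bag and total running time $(\pl+k)^{\OO(\mathrm{tw})}\cdot n^{\OO(1)} = 2^{\OO(\pl\sqrt k\log(\pl+k))}\cdot n^{\OO(1)}$.

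The main obstacle is the treewidth bound. Bounding the raw number of relevant crossings by $\OO(\pl^2 k)$ is too optimistic in general: a large family of pairwise-crossing edges (a clique in the conflict graph) can create $\Theta(k^2)$ relevant crossings while forcing almost all of them to be deleted, so a naive planar-separator estimate would only give width $\OO(k)$. I therefore expect the delicate step to be isolating and resolving such dense substructures through additional forced-deletion rules, so that after preprocessing the relevant conflict graph is genuinely sparse and the $\OO(\pl\sqrt k)$ width bound applies; verifying the interplay between these reductions, the planarization argument, and the degree-constrained dynamic program is where the real work lies.
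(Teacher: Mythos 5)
There is a genuine gap, and it sits exactly where you located it: the treewidth bound. Your preprocessing only caps the conflict degree at $\pl+k$, and that is provably not enough for a $2^{\OO(\pl\sqrt{k}\log(\pl+k))}$ algorithm via treewidth alone. Take $\pl$ constant and $2k$ pairwise-crossing edges: every edge is bad, all of them are ``relevant'' under your rules, and the conflict graph is a clique on $2k$ vertices, so its treewidth really is $2k-1=\Theta(k)$ --- no planarization or separator estimate can do better, and your DP would run in time $(\pl+2)^{\Theta(k)}$. Moreover, no family of \emph{forced-deletion} rules can dissolve such configurations: for an edge $e$ with $|\cros(e)|$ between $\pl+\sqrt{k}$ and $\pl+k$ there is a genuine choice (delete $e$, or delete some $|\cros(e)|-\pl$ of its crossing partners), so reduction rules alone cannot decide it. The paper resolves precisely this with a \emph{branching} step (Lemma~\ref{lem:branching}): light branch deletes $e$; each heavy branch deletes a set of $|\cros(e)|-\pl\ge\sqrt{k}$ edges from $\cros(e)$, of which there are at most $\binom{\pl+k}{\pl}\le(\pl+k)^{\pl}$. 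Since each heavy branch drops the budget by at least $\sqrt{k}$, any root-to-leaf path has at most $\sqrt{k}$ heavy steps, giving $2^{\OO(\pl\sqrt{k}\log(\pl+k))}$ leaves --- this branching, which you omit, is exactly where the subexponential factor in the running time comes from, and it is what reduces all instances to $(\pl+\sqrt{k})$-planar ones, i.e., conflict degree at most $\pl+\sqrt{k}$ rather than $\pl+k$.

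Your second step also needs replacing, even granting the degree bound $\pl+\sqrt{k}$. Planarizing the one-page drawing and applying the planar separator theorem gives separators of size $\OO(\sqrt{md})$ for a $d$-planar ordered graph with $m$ edges, which grows with $m$ and is useless after branching (where $m$ is unbounded), while your attempt to shrink $m$ by restricting to a poly$(\pl,k)$-size relevant part is exactly what the clique example defeats. The paper instead proves a direct combinatorial fact (Lemma~\ref{lem:separator}): the conflict graph of \emph{any} $d$-planar ordered graph has a balanced separator of at most $3(d+1)$ vertices, obtained by a case analysis on edge spans (an edge $e$ with $m/3\le|\spn(e)|\le 2m/3$; all spans small, handled via a sweep over maximal edges and $\lef(\cdot)$ counts; or a minimal edge with huge span, reducing to the second case inside it), with $\cros(e)\cup\{e\}$ contributing the $\OO(d)$ separator vertices. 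Since this holds hereditarily, the separation number is $\OO(\pl+\sqrt{k})$, and the Dvo\v{r}\'ak--Norin theorem converts this into treewidth $\OO(\pl+\sqrt{k})$ --- independent of $m$. Your final dynamic program matches the paper's last step (the $(\pl+2)^{\tw}\cdot N^{\OO(1)}$ algorithm for \deldegp), so with the branching lemma and the span-based separator lemma substituted for your planarization argument, the proof closes.
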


In other words, we obtain a subexponential fixed-parameter tractable
algorithm for \delpplanar\ parameterized by $k$, the number of edges to be deleted; note
that we consider~$\pl$ to be a constant here (although we made explicit
how the running time depends on~$\pl$).  Our algorithm to prove
Theorem~\ref{thm:sub-exp} has two steps.  First it branches on edges
that are crossed by at least $\pl+\sqrt{k}$ other edges.  
When such edges do not exist, we show that the conflict graph $H_{(G,\sigma)}$
has treewidth $\OO(\pl+\sqrt{k})$. This is done by showing that the conflict graph has balanced separators. Finally the bound on the treewidth
allows us to use a known
(folklore) algorithm~\cite{DBLP:journals/corr/abs-2304-14724} for
\deldegp whose dependency is singly exponential in the treewidth of
$H_{(G,\sigma)}$.

\subsection{Branching}
\label{sec:branching}

Let $\cros_{(G,\sigma)}(e)$ denote the set of edges of~$G$ that cross~$e$ with respect to~$\sigma$.  
We drop the subscript $(G,\sigma)$ when it is clear from the context.  
We show that we can use branching to reduce any instance to a
collection of instances where each edge $e$ of the graph satisfies
$|\cros(e)| < \pl+\sqrt{k}$. In particular we show the following lemma.

\begin{lemma}\label{lem:branching}
  Let $(G,\sigma,k)$ be an instance of \delpplanar.  There is a
  $2^{\OO(\pl \sqrt{k} \log(\pl+k))} \cdot n^{\OO(1)}$-time algorithm that
  outputs $2^{\OO(\pl \sqrt{k} \log(\pl+k))}$ many instances of
  \delpplanar\ $(G_1,\sigma,k_1), \ldots, (G_r, \sigma,k_r)$ such that
  for each $i\in [r]$, $G_i$ is a $(\pl+\sqrt k)$-planar graph, and
  $(G,\sigma,k)$ is a \yes-instance of \delpplanar\ if and only if
  $(G_i,\sigma,k_i)$ is a \yes-instance of \delpplanar\ for some
  $i \in [r]$.
\end{lemma}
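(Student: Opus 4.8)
The plan is to design a branching algorithm that repeatedly identifies an edge $e$ with $|\cros(e)| \ge \pl+\sqrt{k}$ and branches on how a hypothetical solution $S$ of size at most $k$ interacts with~$e$. The key observation is that if $(G-S,\sigma)$ is $1$-page $\pl$-planar, then in $G-S$ the edge~$e$ (if it survives) is crossed by at most~$\pl$ remaining edges. Since $e$ is crossed by at least $\pl+\sqrt{k}$ edges in~$G$, the solution must either delete~$e$ itself, or delete enough of the edges in $\cros(e)$ to bring $e$'s crossing count down to at most~$\pl$. In the latter case, at least $\sqrt{k}$ of the edges in $\cros(e)$ must be deleted. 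This gives the branching rule: either put $e$ into the solution (one branch, decreasing the budget $k$ by~$1$), or guess which subset of $\cros(e)$ of size exactly $\sqrt{k}$ is deleted and commit to deleting it (decreasing $k$ by $\sqrt{k}$ in each such branch).

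First I would make the branching precise. In the ``keep $e$'' branch I set $G_{\mathrm{new}} = G - e$ and $k_{\mathrm{new}} = k-1$. In the ``delete crossers'' branch, I note that if $e$ is kept, the solution deletes at least $\sqrt{k}$ edges of $\cros(e)$; I enumerate all size-$\lceil\sqrt{k}\rceil$ subsets $C \subseteq \cros(e)$, and for each one I recurse on $(G - C, \sigma)$ with budget $k - |C|$. Correctness of the equivalence follows because any valid solution either contains~$e$ or, if not, must delete at least $\sqrt{k}$ crossers of~$e$, so it refines one of the enumerated subsets; conversely every produced instance is obtained by deleting edges, so a solution to a child instance extends to a solution of the parent. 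The process stops when no edge has $|\cros(e)| \ge \pl+\sqrt{k}$, at which point every remaining edge is crossed by fewer than $\pl+\sqrt{k}$ others, i.e.\ the current graph is $(\pl+\sqrt{k})$-planar, as required.

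The main obstacle will be bounding the number of leaves of the branching tree by $2^{\OO(\pl\sqrt{k}\log(\pl+k))}$, which requires a careful measure. The naive branching factor in the ``delete crossers'' branch is $\binom{\pl+\sqrt{k}}{\sqrt{k}}$, since we only need to enumerate subsets of a set of size just below $\pl+\sqrt{k}$ (we can truncate $\cros(e)$ to any $\pl+\lceil\sqrt k\rceil$ of its elements, because deleting $\sqrt k$ of \emph{any} $\pl+\sqrt k$ crossers is what is forced). This binomial is at most $(\pl+\sqrt{k})^{\sqrt{k}} = 2^{\OO(\sqrt{k}\log(\pl+k))}$. The crucial point is that the budget drops by $\sqrt{k}$ in each such branch, so the depth of any root-to-leaf path through ``delete crossers'' branches is only $\OO(\sqrt{k})$; combined with the branching factor this yields $\bigl(2^{\OO(\sqrt{k}\log(\pl+k))}\bigr)^{\OO(\sqrt{k})} = 2^{\OO(\sqrt{k}\cdot\sqrt{k}\log(\pl+k))} = 2^{\OO(k\log(\pl+k))}$ for those branches alone, which is too large. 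I therefore expect the analysis to use a refined measure $\mu = k + (\text{something})$ or to interleave the two branch types so that the total is controlled by $\pl\sqrt{k}$ rather than $k$; the intended bound $2^{\OO(\pl\sqrt{k}\log(\pl+k))}$ suggests measuring progress so that the number of ``delete'' events is $\OO(\pl\sqrt{k})$ overall rather than $\OO(\sqrt{k})$ per path times a large factor. Getting this measure right, and verifying that the per-node work and the number of output instances both respect the stated bound, is the delicate part; the correctness of the equivalence itself is routine once the branching rule is stated.
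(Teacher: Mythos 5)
Your branching rule is sound, and it is essentially the paper's rule in disguise (your truncation of $\cros(e)$ to $\pl+\lceil\sqrt k\rceil$ elements is a valid variant; the paper instead deletes $|\cros(e)|-\pl$ crossers at once, after first forcing the deletion of $e$ whenever $|\cros(e)|>\pl+k$). But your proof is incomplete at exactly the step you flag yourself: you concede the tree-size bound and speculate about an unspecified ``refined measure.'' No such measure is needed; the missing idea is small but decisive, namely binomial symmetry. Your heavy branching factor is $\binom{\pl+\lceil\sqrt k\rceil}{\lceil\sqrt k\rceil}=\binom{\pl+\lceil\sqrt k\rceil}{\pl}\le(\pl+\sqrt k+1)^{\pl}=2^{\OO(\pl\log(\pl+k))}$ --- you bounded it by $(\pl+\sqrt k)^{\sqrt k}$, which is the wrong side of the binomial. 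Equivalently: instead of guessing which $\lceil\sqrt k\rceil$ crossers die, guess which $\pl$ crossers survive. This is exactly how the paper gets its branching factor: it enumerates subsets of $\cros(e)$ of size $|\cros(e)|-\pl$ and observes $\binom{|\cros(e)|}{|\cros(e)|-\pl}=\binom{|\cros(e)|}{\pl}\le\binom{\pl+k}{\pl}\le(\pl+k)^{\pl}$.

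With that correction your own analysis closes: each heavy branch still costs at least $\lceil\sqrt k\rceil$ budget, so any root-to-leaf path has at most $\sqrt k$ heavy branches, but now each contributes only $2^{\OO(\pl\log(\pl+k))}$ choices. To handle the interleaving of light and heavy branches (which you also do not address), the paper uses a word-counting argument: assign each child a character, $0$ for the light branch and a distinct nonzero character for each heavy branch; a leaf corresponds to a word of length at most $k$ (every branch decreases the budget by at least $1$) containing at most $\sqrt k$ heavy labels, so the number of leaves is at most $\sum_{i=0}^{\sqrt k}\binom{k}{i}\bigl((\pl+k)^{\pl}\bigr)^{i}=2^{\OO(\pl\sqrt k\log(\pl+k))}$. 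One further cosmetic difference: with the paper's rule, after a heavy branch the edge $e$ retains exactly $\pl$ crossers and never triggers branching again, whereas your $\lceil\sqrt k\rceil$-at-a-time deletions may branch repeatedly on the same edge; this is harmless for correctness and for the count (the analysis charges only budget decrease), but it should be stated rather than left implicit.
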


\begin{proof}
  Let $e$ be an edge of~$G$ with
  $|\cros(e)| \ge \pl+\lceil\sqrt{k}\rceil$.  If $|\cros(e)| > \pl+k$,
  then $e$ must be deleted, as we cannot afford to keep $e$ and delete
  enough edges from $\cros(e)$.  If $|\cros(e)| \le \pl+k$, then either
  $e$ must be deleted or at least $|\cros(e)|-\pl$ many edges from
  $\cros(e)$ must be deleted, so that at most $\pl$ edges of $\cros(e)$
  stay.  This results in the following branching rule, where we return
  an OR over the answers of the following instances:
  \begin{enumerate}
  \item Recursively solve the instance $(G-e,\sigma, k-1)$. This
    branch is called the \emph{light} branch.
  \item If $|\cros(e)|> \pl +k$, we do not consider other branches.
    Otherwise, for each subset $X$ of $\cros(e)$ with $|\cros(e)|-\pl$
    many edges, recursively solve the instance $(G-X,\sigma,
    k-|X|)$. Each of these branches is called a \emph{heavy} branch. 
  \end{enumerate}
  We are going to show that the recursion tree has
  $2^{\OO(\pl\sqrt{k} \log (\pl+k))}$ branches.  
  Note that the number of possible heavy branches at each is node is
  \[
    \binom{|\cros(e)|}{|\cros(e)|-\pl} ~=~ \binom{|\cros(e)|}{\pl} ~\le~
    \binom{\pl+k}{\pl} ~\le~ (\pl+k)^\pl.
  \]
  To prove the desired upper bound, we interpret the branching tree as
  follows. First note that, in each node, we have at most $(\pl+k)^\pl$
  heavy branches.  We associate a distinct word over the alphabet
  $\Sigma=\{0, 1,\dots, (\pl+k)^\pl\}$ to each leaf (or equivalently each root to leaf path) 
  of the recurrence tree.  For each node of the recurrence tree,
  associate a character from $\Sigma$ with each of its children such
  that the child node corresponding to the light branch gets 
  the character $0$ and the other nodes (corresponding to the heavy
  branches) get a distinct character from $\Sigma \setminus \{0\}$.  
  Now a word over
  the alphabet $\Sigma$ for a leaf~$\ell$ of the recurrence tree is
  obtained by taking the sequence of characters on the nodes of the
  root to leaf $\ell$ path in order.  
  In order to bound the number
  of leaves (and hence the total number of nodes) of the recurrence
  tree, it is enough to bound the number of such words.
  The character~$0$ is called a \emph{light} label and all other
  characters are called \emph{heavy} labels.  Recall that a light
  label corresponds to the branch where $k$ drops by $1$, while the
  heavy labels correspond to the branches where $k$ drops by
  $|\cros(e)|-\pl \ge \sqrt{k}$.  This implies that each word (that is
  associated with the leaf of the recurrence tree) has at most
  $\sqrt{k}$ heavy labels. In order to bound the number of such words,
  we first guess the places in the word that are occupied by heavy
  labels and then we guess the (heavy) labels themselves at these
  selected places. All other positions have the light label on them
  and there is no choice left. Hence, the number of such words is
  upper-bounded by
  \[\sum_{i=0}^{\sqrt{k}} {k \choose i} ((\pl+k)^\pl)^{i} ~\leq~ 
     \sqrt{k} {k \choose
     \sqrt{k}} ((\pl+k)^\pl)^{\sqrt k} ~=~ 2^{\OO(\pl\sqrt{k} \log
     (\pl+k))}.\]

 This shows that the number of such words is bounded by
 $2^{\OO(\pl\sqrt{k} \log (\pl+k))}$, and hence the number of leaves (and
 nodes) of the recurrence tree is bounded by
 $2^{\OO(\pl\sqrt{k} \log (\pl+k))}$.
\end{proof}

\subsection{Balanced Separators in the Conflict Graph}
\label{sec:balanced-separators}

Let $(G,\sigma)$ be an ordered graph.  For any edge $e=(u,v)$ of~$G$,
let $\spn_{(G,\sigma)}(e)$ be the set of all edges $(u',v') \ne e$ of
$G$ such that $\sigma(u) \le \sigma(u') \le \sigma(v') \le \sigma(v)$.
For example, in Figure~\ref{fig:case1}, $\spn(e)=\{e_1\}$.
For any vertex~$w$ of~$G$, let $\lef_{(G,\sigma)}(w)$ be the set of
all edges $(u,v)$ of $G$ such that
$\sigma(u) < \sigma(v) \le \sigma(w)$.  Whenever it is clear from the
context, we will drop the subscript $(G,\sigma)$.  We say that an edge
$e$ of~$G$ is \emph{maximal} if $G$ contains no edge $e'$ such that
$e \in \spn(e')$,

\begin{lemma}[Balanced Separator in the Conflict Graph]
  \label{lem:separator}
  If $(G,\sigma)$ is an ordered $\pl$-planar graph, then $G$ contains
  a set $X$ of at most $3(\pl+1)$ edges such that
  $E(G) \setminus X = E_1 \cup E_2$, $E_1 \cap E_2 = \emptyset$,
  $|E_1| \le 2m/3$, $|E_2| \le 2m/3$, and no edge $e_1 \in E_1$
  crosses an edge $e_2 \in E_2$ with respect to~$\sigma$.
\end{lemma}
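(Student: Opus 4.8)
The plan is to read every edge $e=(u,v)$ as the open interval $(\sigma(u),\sigma(v))$ and to produce the separator from the nesting/crossing geometry of these intervals. The cleanest case is when some edge has a \emph{balanced span}: if there is an edge $e^\star$ with $|\spn(e^\star)|\in[m/3,2m/3]$, then I would set $E_1=\spn(e^\star)\cup\{e^\star\}$ (the edges nested inside $e^\star$), let $E_2$ be all remaining edges, and take $X=\cros(e^\star)$. The point is that $X$ is a genuine separator: an edge $o$ that is neither nested in $e^\star$ nor equal to it can cross an edge nested in $e^\star$ only by having exactly one endpoint strictly inside $e^\star$, i.e.\ only by crossing $e^\star$ itself; edges with both endpoints outside $e^\star$ are disjoint from or contain $e^\star$ and hence nest with everything inside it. Thus no $E_1$--$E_2$ crossing survives, $|X|\le \pl$, and the choice of $e^\star$ makes both sides at most $2m/3$.

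The structural fact that drives the general bound is that \emph{at most $\pl+1$ maximal edges straddle any gap $t$ between two consecutive spine positions}. Indeed, if two maximal edges both contain $t$ they overlap, and since neither lies in the span of the other (both are maximal) their endpoints must interleave, so they cross. Hence the maximal edges through $t$ are pairwise crossing, i.e.\ they form a clique in the conflict graph $H_{(G,\sigma)}$; by $\pl$-planarity the maximum degree, and therefore the clique number, is at most $\pl+1$. This is the source of the factor $\pl+1$ in $3(\pl+1)$, and it lets me delete \emph{all} maximal straddlers of a chosen gap cheaply.

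When no edge has balanced span, every edge has either very small span ($<m/3$) or very large span ($>2m/3$), and I would exploit that the edges responsible for deep nesting through a gap are confined inside a single big edge. Passing inside the nesting-minimal big edge reduces to a regime in which each gap is crossed by only $\OO(\pl)$ edges in total, so I can afford to put \emph{all} straddlers of a balancing gap into $X$. Here balance itself is not the difficulty, since the straddlers can be assigned to whichever of $E_1,E_2$ is lighter (and $\lef$ gives a monotone handle on how the entirely-left/entirely-right counts grow as the gap sweeps rightward); the difficulty is crossing-consistency.

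The step I expect to be the main obstacle is exactly this handling of the \emph{non-maximal} straddlers: an edge that contains $t$ but is nested inside a deleted maximal straddler. There can be many of these (a long nested chain through $t$), so they cannot all be deleted and must instead be routed to a side — yet such an inner straddler may cross an edge lying strictly on the opposite side, so the $2/3$-balanced split and the $\OO(\pl)$-size separator have to be realized at one and the same gap. I would argue that, after the span pre-reduction has removed deep balanced nesting, only $\OO(\pl)$ further straddlers need to be charged to $X$ to make the left/right assignment consistent; the factor $3$ then reflects paying at most one clique of $\le\pl+1$ crossing edges at the left boundary, one at the right boundary, and one for the maximal straddlers of the cut itself. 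Making this simultaneous balancing-and-routing precise — in particular combining the inside/outside parts of the big-edge case while keeping both sides at most $2m/3$ — is the delicate part of the argument.
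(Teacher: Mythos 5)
Your balanced-span case is essentially the paper's Case~1 and is sound (one small slip: with $E_1=\spn(e^\star)\cup\{e^\star\}$ you can get $|E_1|=2m/3+1$ when $|\spn(e^\star)|=2m/3$; put $e^\star$ into $X$ instead, which still leaves $|X|\le\pl+1$). Your clique observation --- maximal edges through a common gap pairwise cross, so there are at most $\pl+1$ of them --- is also correct. But the heart of the lemma is the unbalanced case, and there your argument stops exactly where the work begins. The claim that after passing inside the nesting-minimal big edge ``each gap is crossed by only $\OO(\pl)$ edges in total'' is false: in the small-span regime (every span at most $m/3$) a single gap can still be straddled by a nested chain of $\Theta(m)$ pairwise non-crossing edges, and only the \emph{maximal} straddlers form your small clique. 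You acknowledge this yourself and leave the ``simultaneous balancing-and-routing'' of the non-maximal straddlers as an open obstacle --- so the proposal is genuinely incomplete at the one step that gives the lemma its content.

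The missing idea (the paper's Case~2) is to cut not at an arbitrary balancing gap but at the left endpoint $u_{a+1}$ of a \emph{maximal} edge $(u_{a+1},v_{a+1})$. Maximality is what kills the nested chains: any edge straddling the position of $u_{a+1}$ either crosses $(u_{a+1},v_{a+1})$ or contains it, and containment contradicts maximality; hence \emph{all} straddlers of this particular cut lie in $\cros((u_{a+1},v_{a+1}))$, so there are at most $\pl$ of them, and there is no routing problem at all --- each member of a nested chain either ends at or before $u_{a+1}$ (and goes to $E_1=\lef(u_{a+1})$) or crosses the maximal edge (and goes to $X$). Balance is then obtained not by assigning straddlers to the lighter side but by choosing the cut via the monotone quantity $|\lef(\cdot)|$ evaluated at the left endpoints $u_1,u_2,\dots$ of the maximal edges: since $\lef(u_{a+1})\subseteq\lef(u_a)\cup\spn((u_a,v_a))\cup\cros((u_a,v_a))\cup\{(u_a,v_a)\}$, taking the largest $a$ with $|\lef(u_a)|\le m/3$ yields $m/3<|\lef(u_{a+1})|\le 2m/3+\pl+1$, and moving $\pl+1$ surplus edges from $E_1$ into $X$ restores $|E_1|\le 2m/3$ with $|X|\le 2(\pl+1)$. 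The big-span case then works as you sketched: restrict to the inclusion-minimal edge $e$ with $|\spn(e)|>2m/3$ (inside which all spans are at most $m/3$, since Case~1 fails), run the above there, and add $\cros(e)\cup\{e\}$ to $X$. Note that this is where the third $\pl+1$ in the bound $3(\pl+1)$ of \cref{lem:separator} comes from --- it is not, as your accounting suggests, one clique at each of two boundaries plus one for the cut.
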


\arxiv{
\begin{proof}
  We consider three cases depending on the spans of the edges of~$G$.

  \smallskip

  \noindent\textbf{Case 1:} There exists an edge $e=(u,v) \in E(G)$ such that
  $m/3 \le |\spn(e)| \le 2m/3$.

  \smallskip
    
  In this case, let $X = \cros(e) \cup \{e\}$, let $E_1 = \spn(e)$,
  and let $E_2 = E(G) \setminus (X\cup E_1)$.  Note that, by
  construction, $|X| \leq \pl +1$,
  $|E_1| \le 2m/3$, and $|E_2| \le 2m/3$.  Now let
  $e_1=(u_1,v_1) \in E_1$ and $e_2=(u_2,v_2) \in E_2$.  Since
  $e_1 \in E_1$, we have
  $\sigma(u) \le \sigma(u_1) < \sigma(u_2) \le \sigma(v)$; see
  \cref{fig:case1}.  Since $e_2 \in E_2$, we have
  $\sigma(v_2)\le \sigma(u)$ or $\sigma(v)\le\sigma(u_2)$; see the black
  and the gray versions of~$e_2$ in \cref{fig:case1}, respectively.
  In both cases, $e_1$ and $e_2$ do not cross.

  \smallskip
    
  \noindent\textbf{Case 2:} For every edge $e \in E(G)$, it holds
  that $|\spn(e)| \leq m/3$.

  \smallskip

  Let $M \subseteq E(G)$ be the collection of all maximal edges of $G$
  in $\sigma$.  Let $\mu=|M|$, and let
  $M=\{(u_1,v_1), \dots, (u_\mu,v_\mu)\}$, where
  $\sigma(u_1) < \sigma(u_2) < \dots < \sigma(u_\mu)$.  Note that
  $|\lef(u_1)| \le \dots \le |\lef(u_\mu)|$ and that $|\lef(u_1)|=0$.
  The equality is due to the fact that $u_1$ is the first non-isolated
  vertex of~$G$ in~$\sigma$ (and $v_1$ is the rightmost neighbor
  of~$u_1$).
    
  Let $a \in [\mu]$ 
  be the largest index such that
  $|\lef(u_a)| \leq m/3$.  Since $|\lef(u_1)|=0$, it is clear that
  such an index $a$ exists. Moreover, we have $a<\mu$. This is because 
  $|\lef(v_{a})| \le |\lef(u_a)| + |\spn(u_a,v_a)|\le 2m/3$ and $\lef(v_{\mu})=E(G)$.
  Therefore, $a+1\in [\mu]$.

  We claim that $m/3 < |\lef(u_{a+1})| \leq 2m/3 + \pl+1$.  From the
  choice of $a$, it is clear that $|\lef(u_{a+1})| > m/3$.  Note that
  $\lef(u_{a+1}) \subseteq \lef(u_a) \cup \spn((u_a,v_a)) \cup
  \cros((u_a,v_a))\cup \{ (u_a,v_a)\}$; see \cref{fig:case2}.  
  This yields our claim since
  $|\lef(u_{a+1})| \leq |\lef(u_a)| + |\spn((u_a,v_a))| +
  |\cros((u_a,v_a))| +1 \leq 2m/3 + \pl+1$.

  Now let $X = \cros((u_{a+1},v_{a+1})) \cup \{(u_{a+1},v_{a+1})\}$,
  $E_1 = \lef(u_{a+1})$ and $E_2= E(G)\setminus (X\cup E_1)$.  Since
  $m/3 \le |\lef(u_{a+1})| \le 2m/3+\pl+1$, $|E_1|\le 2m/3+\pl+1$, and
  $|E_2| \le 2m/3$.  Finally, we simply move $\pl+1$ edges from $E_1$ to
  $X$.  Then $|X| \le 2(\pl+1)$ and $|E_1|\le 2m/3$.  Given our
  construction, it is clear that no edge in~$E_1$ crosses any edge
  in~$E_2$; see \cref{fig:case2}.

  \smallskip
    
  \noindent\textbf{Case 3:} There exists an edge $e \in E(G)$ such
  that $|\spn(e)| > 2m/3$.

  \smallskip

  \noindent 
  Let $e=(u,v)$ be an edge of~$G$ such that $|\spn(e)| > 2m/3$ and
  there is no $e' \in \spn(e)$ such that $|\spn(e')| > 2m/3$.  Let
  $V'=\{w \in V(G) \colon \sigma(u) \le \sigma(w) \le
  \sigma(v)\}$.  Let $G'=G[V']$, and let $\sigma'$ be the
  restriction of~$\sigma$ to~$V'$.

  Since Case~1 does not apply, for each
  $e' \in \spn(e)$, we have $|\spn(e')| \leq m/3$.  Therefore, Case~2
  applies to the ordered graph $(G',\sigma')$.  This yields a
  set $X' \subseteq E(G')$ of size at most $2(\pl+1)$,
  and disjoint sets $E'_1$ and $E'_2$ of edges such that
  $E(G') \setminus X' = E'_1 \cup E'_2$, 
  $m/3 \le |E'_1| \le 2m/3$, $|E'_2| \le 2m/3$, and no edge in
  $E'_1$ crosses any edge in $E'_2$.

  Let $X=X' \cup \cros(e) \cup \{e\}$.  Then $|X| \le 3(\pl+1)$.
  Let $E_1 = E'_1$ and $E_2 = E(G)\setminus (X\cup E_1)$.  Since
  $m/3 \le |E'_1| \le 2m/3$, clearly $|E_2| \le 2m/3$.  It remains
  to show that no edge of $E_2$ crosses any edge of $E_1$; see
  \cref{fig:case3}.  By construction, no edge of $E'_2$ crosses
  any edge of $E'_1$.  The edges in $E_2 \setminus E'_2$
  neither cross~$e$ nor do they lie in $\spn(e)$, so they cannot cross
  any edge in~$E_1 \subseteq \spn(e)$.
\end{proof}}{
\begin{proof}[Proof sketch]
  The proof is in the full version.  We consider three cases depending on the spans of the edges of~$G$;
  see \cref{fig:separator}.  Either there exists an edge
  $e=(u,v) \in E(G)$ such that $m/3 \le |\spn(e)| \le 2m/3$ (case~1),
  or for every edge $e \in E(G)$, it holds that $|\spn(e)| \leq m/3$
  (case~2), or there exists an edge $e \in E(G)$ such that
  $|\spn(e)| > 2m/3$ (case~3). 
\end{proof}}

\begin{figure}[tb]
  \centering
  \begin{subfigure}[b]{.33\textwidth}
    \centering \includegraphics[page=1,scale=.9]{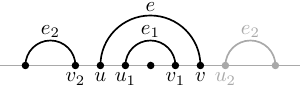}
    \caption{case 1}
    \label{fig:case1}
  \end{subfigure}
  \hfill
  \begin{subfigure}[b]{.23\textwidth}
    \centering \includegraphics[page=2,scale=.9]{separator}
    \caption{case 2}
    \label{fig:case2}
  \end{subfigure}
  \hfill
  \begin{subfigure}[b]{.4\textwidth}
    \centering \includegraphics[page=3,scale=.9]{separator}
    \caption{case 3}
    \label{fig:case3}
  \end{subfigure}
  \caption{Case distinction for the proof of \cref{lem:separator}.}
  \label{fig:separator}
\end{figure}

\subsection{Proof of Theorem~\ref{thm:sub-exp}}

We now need to establish a relation between the treewidth of the graph
and the size of a balanced separator in it. For this we use the result
of Dvo\v{r}{\'a}k and Norin~\cite{DBLP:journals/jct/DvorakN19} that
shows a linear dependence between the treewidth and the
\emph{separation number} of a graph: the separation number of a graph
is the smallest integer $s$ such that every subgraph of the given
graph has a balanced separator 
of size at most $s$.  
A balanced separator in a graph $H$ is a set of vertices $B$ such that the vertex set of $H-B$ can be partitioned into two parts $V_1$ and $V_2$ such that $E(V_1,V_2) =\emptyset$ and $|V_1|, |V_2| \leq 2|V(H)|/3$.
In other words,
they show that if the separation number of the graph is $s$, then the
treewidth of such a graph is $\mathcal{O}(s)$.

Recall that $(G,\sigma,k)$ is an instance of \delpplanar.  By
Lemma~\ref{lem:separator}, if the ordered graph $(G,\sigma)$ is
$(\pl+\sqrt{k})$-planar, then the conflict graph $H_{(G,\sigma)}$ has a
balanced separator of size at most $3(\pl+\sqrt{k}+1)$.  Thus, due to
the result of Dvo\v{r}{\'a}k and
Norin~\cite{DBLP:journals/jct/DvorakN19}, the treewidth of
$H_{(G,\sigma)}$ is $\mathcal{O}(\pl+\sqrt{k})$.

Given a graph with $N$ vertices and treewidth~$\tw$, one can compute, in
$(d+2)^{\tw} \cdot N^{\OO(1)}$ time, the smallest set of vertices whose
deletion results in a graph of degree at most
$d$~\cite{DBLP:journals/corr/abs-2304-14724}.  Applying this
result to the conflict graph $H_{(G,\sigma)}$, which has at most
$|V(G)|^2=n^2$ vertices and treewidth $\OO(\pl+\sqrt{k})$, we conclude
that \delpplanar\ can be solved in
$2^{\OO((\pl+\sqrt{k}) \log \pl)} \cdot n^{\OO(1)}$ time if the given
ordered graph $(G,\sigma)$ is $(\pl+\sqrt{k})$-planar.

From \cref{lem:branching}, we can assume, at the expense of a
multiplicative factor of
$2^{\OO(\pl\sqrt{k} \log (k+\pl))} \cdot n^{\OO(1) }$ on the running time,
that the given ordered graphs $(G,\sigma)$ to consider are $(\pl+\sqrt{k})$-planar.
Thus, given $(G,\sigma,k)$, we can solve \delpplanar\ in
$2^{\OO(\pl\sqrt{k} \log (\pl+k))} \cdot n^{\OO(1) }$ time.  This
concludes the proof of \cref{thm:sub-exp}.

\section{Edge Deletion to \boldmath{\pg}-Page Planar}
\label{sec:del-P-page-planar}

In this section we treat the problem \delppageplanar{\pg}, which is the special case of \textsc{Edge Deletion to \pg-Page \pl-Planar} for $\pl=0$.
It can be solved by brute force in $\OO((\pg+1)^m \cdot n^2)$ time:
For each mapping of the $m$ edges to the $\pg$ pages, 
with the ``+1'' to mark edge deletion, 
check for each pair of edges assigned to the same page whether they intersect.
It can also be solved in $4^m\cdot n^{\OO(1)}$ time: for each of
the $2^m$ subsets of $E(G)$, use \cref{cor:fopn_exact} to 
decide whether its fixed-vertex-order page number is at most $\pg$.

We now consider a new parameter in addition to~\pg.  The edge set of
an ordered graph $(G,\sigma)$ corresponds to a set of open intervals
on the real line; namely every edge $(u,v)$ of~$G$ is mapped to the
interval $(\sigma(u),\sigma(v))$.  Given a set~$\mathcal{I}$ of
intervals, a \emph{hitting set} for~$\mathcal{I}$ is a set of points
on the real line such that each interval contains at least one of the
points.  Note that %
a hitting set
can be much smaller than a vertex cover: an ordered
graph $(G,\sigma)$ with a hitting set of size~1 can have linear vertex
cover number (e.g., $G=K_{n,n}$). \arxiv{

}{}Given a set~$\mathcal{I}$ of $m$ open intervals, a
minimum-size hitting set for~$\mathcal{I}$ can be found in
$\OO(m \log m)$ time by \arxiv{the following}{a} simple greedy
algorithm\arxiv{: sort
the intervals in~$\mathcal{I}$ by (non-decreasing) right endpoints,
then repeatedly put a point just before
the right endpoint~$v$ of the first interval~$e=(u,v)$ into the
hitting set under construction and delete from~$\mathcal{I}$ all
intervals (including~$e$) that contain~$v$.  Given an ordered graph
$(G,\sigma)$, let $h(G,\sigma)$ %
denote the minimum size of a hitting set for~$E(G)$}{}.

For two edges $(u,v),(u',v')$ of $(G,\sigma)$, 
we say that $(u,v)$ \emph{contains} $(u',v')$ if the interval
$(\sigma(u),\sigma(v))$ contains the interval $(\sigma(u'),\sigma(v'))$.
If $(u,v)$ and $(u',v')$ cross with respect to~$\sigma$, then there is 
no containment, otherwise one contains the other.

\subparagraph*{Hitting set of size~1.}

We start by treating the following special case of
\delppageplanar{\pg}.  Given an ordered graph~$(G,\sigma)$, a
point~$z$ on the real line that is contained in every interval defined
by~$E(G)$, a number $\pg$ of pages, and a threshold $k \ge 0$, we want
to decide whether there is a set $E' \subseteq E(G)$ of size at most
$k$ such that that $G-E'$ can be drawn without crossings on $\pg$
pages (respecting vertex order~$\sigma$).
Note that if there is a hitting set of size~1, then $G$ is necessarily
bipartite and that $z \notin \sigma(V(G))$.  We show that
\delppageplanar{\pg} can be solved efficiently if $h(G,\sigma)=1$.

Alam et al.~\cite{abgkp-mpng-TCS22} have called this setting
\emph{separated}; they showed that the \emph{mixed} page number of an
ordered $K_{n,n}$ is $\lceil 2n/3 \rceil$ in this case.  While we
study the (usual) page number of an ordered graph where each page
corresponds to a stack layout, the mixed page number asks for the
smallest number of stacks and queues (where nested edges are not
allowed on the same page) needed to draw an ordered graph.

\begin{theorem}
  \label{thm:1-hitting-set}
  Given an ordered graph $(G,\sigma)$ with $n$ vertices, $m$ edges,
  and $h(G,\sigma)=1$, \delppageplanar{\pg} can be solved in
  $\OO(m^3 \log n \log\log p)$ time.
\end{theorem}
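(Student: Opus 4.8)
The plan is to exploit the extremely rigid structure that a hitting set of size~1 forces on the conflict graph. Since the point $z$ lies in every edge interval and $z\notin\sigma(V(G))$, the vertices split into a left part $L=\{w : \sigma(w)<z\}$ and a right part $R=\{w:\sigma(w)>z\}$, and every edge joins $L$ to $R$. I would first record the resulting \emph{crossing dichotomy}: represent each edge $e=(u,v)$ by the pair $(\sigma(u),\sigma(v))$, and verify directly from the interleaving definition of crossing that two edges $e,e'$ cross with respect to~$\sigma$ if and only if their pairs are strictly comparable in the product order $\prec$ (both coordinates strictly increase from one to the other), whereas they are nested or share an endpoint, i.e.\ do \emph{not} cross, exactly when their pairs are $\prec$-incomparable. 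Hence $H_{(G,\sigma)}$ is the comparability graph of the two-dimensional poset $(E(G),\prec)$, that is, a permutation graph.

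Next I would translate the deletion problem into a purely order-theoretic one. A single crossing-free page is exactly a set of pairwise non-crossing edges, i.e.\ a $\prec$-antichain; a crossing-free \pg-page book embedding of a subset $F\subseteq E(G)$ is a partition of $F$ into \pg{} antichains. By Mirsky's theorem such a partition exists if and only if $F$ contains no $\prec$-chain of length $\pg+1$, a chain being a set of pairwise crossing edges. Consequently, deleting $S$ so that $(G-S,\sigma)$ is \pg-page planar amounts to choosing $F=E(G)\setminus S$ whose longest $\prec$-chain has at most \pg{} elements, and the minimum number of deletions equals $m-d_\pg$, where $d_\pg$ is the maximum size of a union of \pg{} antichains of $(E(G),\prec)$. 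The case $\pg=1$ is precisely the maximum-antichain (independent set in a circle graph) computation of \cref{lem:fopn_iteration}.

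It then remains to compute $d_\pg$. Sorting the edges by $\sigma(u)$, breaking ties by \emph{decreasing} $\sigma(v)$, turns $\prec$-chains into strictly increasing subsequences of the induced sequence of right endpoints; thus $d_\pg$ is the largest subsequence whose longest strictly increasing run has length at most \pg, equivalently the maximum number of elements coverable by \pg{} weakly decreasing subsequences. This maximum union of \pg{} antichains I would obtain by a min-cost flow of value \pg{} on the incomparability DAG (nodes are edges, each carrying unit profit; an arc joins two edges that may sit consecutively on one page), or equivalently through the Greene--Kleitman shape $\lambda$, for which $d_\pg=\sum_i\min(\lambda_i,\pg)$. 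Either route shows the quantity is polynomial-time computable, and combined with the reduction above it solves \delppageplanar{\pg} for $h=1$.

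I expect the real work to be in pinning down the running time, and in particular in keeping the dependence on \pg{} as mild as $\log\log\pg$ rather than linear. I would first note that if the longest chain (the clique number of the permutation graph, found in near-linear time) is at most \pg, then no deletion is needed and $d_\pg=m$, so we may assume \pg{} is below the longest chain. The $O(m)$-node min-cost flow uses $O(m)$ shortest-path augmentations; implementing Dijkstra with reduced costs and an integer priority queue over distance labels bounded by \pg{} gives the $\log\log\pg$ factor, while coordinate compression and the sorting/selection steps contribute the $\log n$, yielding the claimed $O(m^3\log n\log\log\pg)$ bound. The delicate point is this flow bookkeeping together with the correct handling of ties (edges sharing an endpoint), since these must be mapped to $\prec$-incomparability to preserve the crossing dichotomy; the structural reduction itself is the easy part.
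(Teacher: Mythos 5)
Your proposal is correct and, modulo a dual change of viewpoint, it is essentially the paper's own proof: your ``incomparability DAG'' (arcs between edges that can sit consecutively on a page, i.e., nested or endpoint-sharing pairs) is exactly the paper's auxiliary graph $G^+$ ordered by containment, and your min-cost flow computing a maximum union of \pg{} antichains of the crossing order is precisely the paper's computation of \pg{} vertex-disjoint paths covering the most nodes, via node-splitting with unit capacities, cost~$-1$ internal arcs, and an $(s,s')$ arc of capacity~\pg; the Mirsky/Greene--Kleitman framing is a correct but optional gloss on this reduction. The one unsubstantiated detail is your runtime sketch: reduced-cost distance labels in this network are bounded by the longest containment chain, i.e.\ $\OO(m)$, not by \pg, so the ``integer priority queue over labels bounded by \pg'' claim does not stand as stated --- the paper instead obtains $\OO(m^3 \log n \log\log \pg)$ by invoking the double-scaling min-cost flow algorithm of Ahuja, Goldberg, Orlin, and Tarjan~\cite{agot-fmcfds-MP92} with $n'=\OO(m)$, $m'=\OO(m^2)$, $U=\pg$, and $C=1$, which applies verbatim to your network.
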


\begin{proof}
  From $(G,\sigma)$ we construct an acyclic directed auxiliary
  graph~$G^+$,
  from which we then construct an $s$--$t$ flow network~\N
  such that an integral maximum $s$--$t$ flow of minimum cost in~\N
  corresponds to~$p$ vertex-disjoint directed paths in~$G^+$ of maximum
  total length, and each path in~$G^+$ corresponds to a set of edges
  in~$G$ that can be drawn without crossings on a single page in a
  book embedding of $(G,\sigma)$.  The set $E'$ of edges that need to
  be deleted from~$G$ such that $G-E'$ has page number~$p$ corresponds
  to the vertices of~$G^+$ that do not lie on any of the $p$ paths.

  We now describe these steps in detail.  The auxiliary graph~$G^+$
  has a node for each edge of~$G$ and an arc from edge
  node~$(a,b)$ to edge node~$(a',b')$ if 
  in $(G,\sigma)$ the edge $(a',b')$ contains the edge $(a,b)$
  (meaning that the edges do not cross);
  see \cref{fig:1-hitting-set}.  Hence $G^+$ has exactly $m$ nodes
  and at most $m \choose 2$ edges, and can be constructed
  from~$(G,\sigma)$ in $\OO(m^2)$ time.

  \begin{figure}[tb]
    \begin{subfigure}[b]{.42\textwidth}
      \centering
      \includegraphics{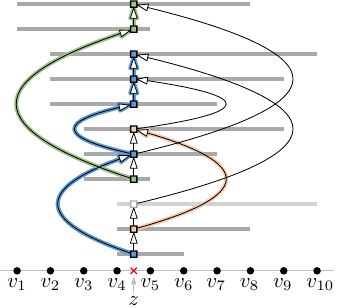}
      \caption{intervals corrsponding to the edges of~$G$; auxiliary
        graph $G^+$ without transitive edges}
      \label{fig:1-hitting-set-instance}
    \end{subfigure}
    \hfill
    \begin{subfigure}[b]{.54\textwidth}
      \hspace*{-3ex}
      \includegraphics{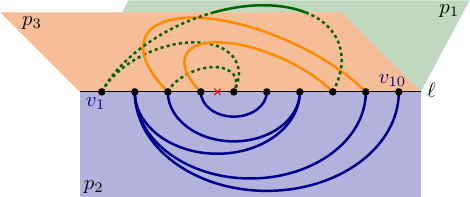}
      \caption{optimal solution for~(a): only the edge~$(v_4,v_{10})$ is
        deleted; the pages correspond to the colored paths in~(a)}
      \label{fig:1-hitting-set-solution}
    \end{subfigure}
    \caption{Instance with hitting set of size~1 and optimal solution
      for three pages.}
    \label{fig:1-hitting-set}
  \end{figure}

  The $s$--$t$ flow network~\N is defined as follows.  For each
  node~$v$ of~$G^+$, introduce two vertices~$v_\mathrm{in}$
  and~$v_\mathrm{out}$ in \N, 
  connected by the arc
  $(v_\mathrm{in},v_\mathrm{out})$ of capacity~1 and cost~$-1$. 
  All other arcs in~\N have cost~0.   
  For each arc~$(u,v)$ of~$G^+$, add the arc $(u_\mathrm{out},v_\mathrm{in})$ of capacity~1 to~\N.
  Then add to~\N new vertices~$s$, $s'$, and~$t$, the edge~$(s,s')$ of capacity~$p$, and the edges
  $\{ (s',v_\mathrm{in}), (v_\mathrm{out},t) \colon v \in V(G^+) \}$ of capacity~1.  
  Summing up, \N has $2m+3$ vertices, 
  at most ${m \choose 2} + 3m + 1$ edges, and can be constructed
  from~$G^+$ in $\OO(m^2)$ time.

  Due to the edge~$(s,s')$, a maximum flow in~\N has value at most~$p$.
  If $m \ge p$ (otherwise the instance is trivial, and no edge has to
  be deleted), then a maximum flow has value exactly~$p$.  Since all
  edge capacities and costs are integral, the minimum-cost circulation
  algorithm of Ahuja, Goldberg, Orlin, and
  Tarjan~\cite{agot-fmcfds-MP92} yields an integral flow.
  Since all edges (except for~$(s,s')$) have edge
  capacity~1 and \N~is acyclic, %
  the edges (except for~$(s,s')$) with 
  non-zero flow form $p$~paths of flow~1 from~$s'$ to~$t$ that are 
  vertex-disjoint except for their endpoints.  
  These paths (without $s'$ and~$t$) correspond to
  vertex-disjoint paths in~$G^+$.  Due to the negative cost of the edges
  of type $(v_\mathrm{in},v_\mathrm{out})$, the flow maximizes the number
  of such edges with flow.  This maximizes the number of vertices
  in~$G^+$ that lie on one of the $p$ paths.  This, in turn, maximizes
  the number of edges of $G$ that can be drawn without crossings on
  $p$ pages in a book embedding of $(G,\sigma)$.  Given a flow network
  with $n'$ vertices, $m'$ edges, maximum capacity~$U$, and maximum
  absolute cost value~$C$, the algorithm of Ahuja et al.\ runs in
  $\OO(n'm' (\log\log U) \log(n'C))$ time.  In our case,
  $n' \in \OO(m)$, $m' \in \OO(m^2)$, $U=p$, and $C=1$.  
  Hence computing the maximum flow of minimum cost in~\N takes
  $\OO(m^3 \log n \log\log p)$ time.  This dominates the time needed
  to construct~$G^+$ and~\N.
\end{proof}

In our forthcoming algorithm, we will use an extension of this result, 
as follows. Two subsets $E',E''\subset E(G)$ are \emph{compatible} if 
$|E'|=|E''|$ and there is an enumeration $e'_1,\dots,e'_{|E'|}$ of $E'$
and an enumeration $e''_1,\dots,e''_{|E'|}$ of $E''$
such that $e'_i$ is contained in $e''_i$ for all $i\in [|E'|]$.
Note that we may have $E'\cap E'' \ne \emptyset$.

\begin{lemma}
  \label{lem:1-hitting-set}
  Given an ordered graph $(G,\sigma)$ with $n$ vertices, $m$ edges,
  $h(G,\sigma)=1$, and subsets $E',E''\subset E(G)$ with $\pg=|E'|=|E''|$,
  we can decide, in $\OO(m^3 \log n \log\log p)$ time, whether $E'$ and $E''$ are compatible and, if yes,
  solve a version of \delppageplanar{\pg} where, on each page, one edge of $E'$
  is contained in all others edges and one edge of $E''$ contains all 
  other edges on that page.
\end{lemma}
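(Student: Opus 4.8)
The plan is to reuse, almost verbatim, the flow network \N built in the proof of \cref{thm:1-hitting-set}, changing only its boundary. Recall that there the $p$ pages of a book embedding correspond to $p$ vertex‑disjoint paths in the containment DAG~$G^+$, and that because $h(G,\sigma)=1$ every pair of non‑crossing edges is nested; hence each such path is a chain of nested edges, i.e.\ a valid page with a unique innermost (minimum) and outermost (maximum) edge. The new requirement is only that the innermost edge of each page lie in $E'$ and the outermost in $E''$, with every element of $E'$ (resp.\ $E''$) used in this role exactly once across the $p$ pages. I would enforce this by connecting $s'$ \emph{only} to $v_\mathrm{in}$ for $v\in E'$ and $v_\mathrm{out}\to t$ \emph{only} for $v\in E''$ (all capacities~$1$), while keeping the capacity‑$p$ arc $(s,s')$, the split arcs $(v_\mathrm{in},v_\mathrm{out})$ of cost~$-1$, and the containment arcs. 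The internal structure, integrality of the min‑cost flow, and the running‑time analysis via the algorithm of Ahuja et al.~\cite{agot-fmcfds-MP92} are inherited unchanged, giving the claimed $\OO(m^3 \log n \log\log p)$ bound. Compatibility is then decided simply by testing whether the maximum flow value equals~$p$.

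Next I would establish the forward direction: a flow of value~$p$ forces compatibility, and the min‑cost flow solves the constrained deletion problem. Since $|E'|=|E''|=p$ and there are exactly $p$ unit‑capacity arcs out of $s'$ and exactly $p$ into~$t$, a flow of value~$p$ saturates all of them, so every element of $E'$ is the source of exactly one path and every element of $E''$ the sink of exactly one path. By conservation at a saturated split arc, an $E'$‑node receives its unit from $s'$ (never from a predecessor) and an $E''$‑node sends its unit to~$t$; thus $E'$‑edges occur only as innermost and $E''$‑edges only as outermost edges of their pages, and each page carries exactly one of each. Decomposing the flow into $p$ vertex‑disjoint paths yields a bijection $e'_i\mapsto e''_i$ with $e'_i$ nested in $e''_i$ (the path follows containment arcs), which is a compatibility witness. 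Because the split arcs carry cost~$-1$, the min‑cost maximum flow maximises the number of used split arcs, hence the number of retained edges of~$G$; the uncovered nodes of~$G^+$ form a minimum deletion set, solving the constrained version of \delppageplanar{\pg}.

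The main obstacle is the converse direction when $E'\cap E''\neq\emptyset$: an abstract compatibility witness is merely a perfect matching between $E'$ and $E''$ in the bipartite nesting graph, and such a matching need \emph{not} be realizable by vertex‑disjoint paths, since an edge $f\in E'\cap E''$ might be asked to be innermost on one page and outermost on another. I would resolve this by an uncrossing argument. If a perfect matching uses $f$ both as a left endpoint matched to some $g\neq f$ (so $f\subsetneq g$) and as a right endpoint matched from some $h\neq f$ (so $h\subsetneq f$), then $h\subsetneq f\subsetneq g$, so I may rematch $f$ to itself and $h$ to~$g$ (legal since $h\subseteq g$). A short case check shows that this strictly decreases the number of such ``split'' edges and creates no new ones, because the only altered partners are those of $f$, $g$, and $h$, and the conflict status of $g$ and $h$ is unchanged. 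Iterating produces a perfect matching in which every shared edge is matched to itself, i.e.\ forms a singleton page; this matching is realized by vertex‑disjoint paths and hence routes a flow of value~$p$. Thus \N\ carries a flow of value~$p$ exactly when $E'$ and $E''$ are compatible, and the modified network both decides compatibility and solves the constrained deletion problem within the stated time bound.
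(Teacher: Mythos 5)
Your proposal is correct and takes essentially the same route as the paper: you modify the flow network of \cref{thm:1-hitting-set} so that $s'$ feeds only the nodes of $E'$ and only the nodes of $E''$ feed $t$, and you decide compatibility by whether the maximum flow value equals~$\pg$. You in fact go beyond the paper's own proof, which merely asserts this equivalence: your uncrossing argument for an edge $f\in E'\cap E''$ demanded in both roles (rematching $h\subsetneq f\subsetneq g$ to the pairs $(f,f)$ and $(h,g)$) supplies the one nontrivial direction---that abstract compatibility implies a node-disjoint routing of value~$\pg$ despite each split arc having capacity~$1$---which the paper leaves implicit.
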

\begin{proof}
	We adapt the proof of \cref{thm:1-hitting-set} by modifying the flow 
	network~\N that is considered. More precisely, we insert
	arcs from $s'$ only to the edges $e'\in E'$, and we insert arcs
	to $t$ only from the edgs $e''\in E''$. No other arcs go out from $s'$
	nor go into $t$. 
	
	Note that $E'$ and $E''$ are compatible if and only if the value 
	of the maximum flow in the modified flow network is exactly~\pg.
\end{proof}

Our technique, based on flows, does not allow us to enforce 
a pairing of the edges in~$E'$ and in~$E''$.  With other words,
we cannot select edges $e'_1,e'_2\in E'$ and $e''_1,e''_2\in E''$, 
and insist that $e'_1$ and~$e''_1$ go to one page, and $e'_2$ and~$e''_2$
go to another page.  This difficulty
will play an important role in our forthcoming extension.

\subparagraph*{An XP algorithm for the general case.}
Let $H$ be a finite hitting set of~$(G,\sigma)$.
We assume, without loss of generality,
that $H \cap \sigma(V(G))=\emptyset$.
Given a subset~$X\subseteq H$, we say that an
edge~$(u,v)$ of~$G$ with $\sigma(u)<\sigma(v)$ \emph{bridges}~$X$ if
$\sigma(u) < \min X$, $\max X < \sigma(v)$, and $X$ is the largest
subset of $H$ with this property.  
For each $X\subseteq H$, let \emph{$E_X$} 
be the subset of edges of~$(G,\sigma)$ that bridge~$X$. For example, in
\cref{fig:3-hitting-set}, $|H|=3$, and the edges
in~$E_H$ lie in the outer gray region.

\begin{figure}[tb]
  \centering
  \includegraphics[page=1]{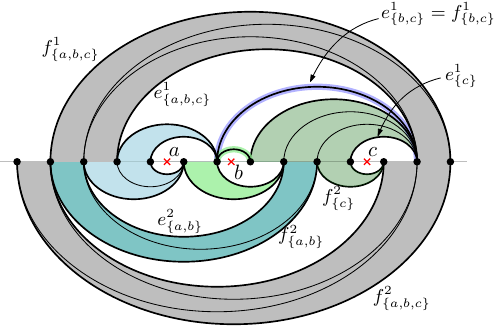}
  \caption{Encoding $\langle \mathcal{E}^1,\mathcal{E}^2 \rangle$ of a
    2-page drawing for an instance with hitting set~$H=\{a,b,c\}$ (red
    crosses).  For each $X \subseteq H$ and page $q \in [2]$, the
    edges $e_X^q$ and $f_X^q$ (if they exist) are thicker than the
    other edges.  Each colored region corresponds to a set of edges
    that bridge the same subset of~$H$.}
  \label{fig:3-hitting-set}
\end{figure}

Consider any drawing of a subgraph of $(G,\sigma)$ with edge set $\tilde E$ 
on $\pg$ pages without crossings. For each page $q \in [\pg]$, 
let \emph{$\tilde E^q$} be the set of edges in $\tilde E$ that are on page~$q$,
and let \emph{$\mathcal{X}^q$} be the family of subsets of~$H$ bridged by
some edge of $\tilde E^q$.  Since there are no
crossings on page~$q$, the sets of $\mathcal{X}^q$ form a so-called {\em laminar} 
family: any two sets in $\mathcal{X}^q$ are either disjoint or 
one contains the other. %
For each $X\in \mathcal{X}^q$, let \emph{$e_X^q$} be the smallest edge
of $\tilde E^q$ that bridges $X$, and let \emph{$f_X^q$} be the largest edge
of $\tilde E^q$ that bridges $X$; it may be that $e_X^q=f_X^q$. 
Note that for each $X,Y\in \mathcal{X}^q$ with 
$X\subsetneq Y$, the edge $e_Y^q$ contains $f_X^q$.
We say that the \emph{partial encoding} of~$\tilde E$ on page~$q$ is 
$\mathcal{E}^q= \{ (X,e_X^q, f_X^q) \mid X\in \mathcal{X}^q \}$ and the \emph{encoding} of~$\tilde E$ is $\langle \mathcal{E}^1,\dots,\mathcal{E}^p \rangle$.

When a set $X$ is bridged on only one page of an optimal drawing, 
say $X\in \mathcal{X}^1$, then we just have to select as many edges 
as possible without crossing from
those contained between $e^1_X$ and $f^1_X$, because the edges of $E_X$ cannot
appear in any other page. 
The challenge that we face is the following: when the same set $X$
appears in $\mathcal{X}^q$ for different $q\in [\pg]$, the choices
of which edges are drawn in each of those pages are not independent.
However, we can treat all such pages together, exchanging some parts
of the drawings from one page to another, as follows.
For each $X\subseteq H$, let \emph{$Q_X$}$\,=\{ q\in [\pg] \colon (X,e_X^q, f_X^q)\in \mathcal{X}^q\}$ be the set of pages where some edges bridge~$X$.

\begin{lemma}
  \label{lem:switching}
	Consider $\tilde E\subseteq E(G)$ that can be drawn in $\pg$ pages without
	crossings, and let $\langle \mathcal{E}^1,\dots, \mathcal{E}^\pg \rangle$ be the corresponding encoding.
	For every $X\subseteq H$ with $Q_X\neq \emptyset$, 
	let \emph{$\tilde E_X'$}$\,=\{ e_X^q\mid q\in Q_X,\, (X,e_X^q, f_X^q)\in \mathcal{E}^q\}$, let
	\emph{$\tilde E''_X$}$\,=\{ f_X^q\mid q\in Q_X,\, (X,e_X^q, f_X^q)\in \mathcal{E}^q \}$, and
	let~\emph{$F_X$} be the set of edges in~$E_X$ obtained when using 
	\cref{lem:1-hitting-set} for $\pg'=|Q_X|$ pages with boundary edges
	$\tilde E'_X$ and $\tilde E''_X$.
	Then the ordered subgraph with edge set $\bigcup_{X} F_X$
	can be drawn on \pg pages without crossings and 
	contains at least as many edges as $\tilde E$.
\end{lemma}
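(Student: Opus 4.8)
The plan rests on the observation that the sets $\{E_X\}_{X\subseteq H}$ partition $E(G)$: each edge $(u,v)$ bridges exactly the (nonempty) set of hitting points strictly inside $(\sigma(u),\sigma(v))$. Hence $\tilde E=\bigsqcup_X(\tilde E\cap E_X)$, and since $F_X\subseteq E_X$ the sets $F_X$ are pairwise disjoint. It therefore suffices to prove (i) the cardinality bound $|F_X|\ge|\tilde E\cap E_X|$ for every $X$, and (ii) that the sets $F_X$, drawn on the pages $Q_X$ returned by \cref{lem:1-hitting-set}, combine into a single crossing-free drawing on $\pg$ pages. Claim~(i) is the easy direction: the edges of $E_X$ all contain any fixed point of $X$, so the instance defining $F_X$ has a size-$1$ hitting set, and $\tilde E\cap E_X$ is a feasible solution of that instance. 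Indeed, in the given drawing the edges of $\tilde E\cap E_X$ occupy exactly the pages $Q_X$, and on each page $q\in Q_X$ the innermost such edge is $e_X^q\in\tilde E'_X$ and the outermost is $f_X^q\in\tilde E''_X$; this is precisely the boundary condition of \cref{lem:1-hitting-set} (in particular $\tilde E'_X$ and $\tilde E''_X$ are compatible, so the lemma does return a set $F_X$). As $F_X$ is a maximum-size solution, $|F_X|\ge|\tilde E\cap E_X|$, and summing over the partition gives $|\bigcup_X F_X|\ge|\tilde E|$. We also record that $F_X\supseteq\tilde E'_X\cup\tilde E''_X$, since every boundary edge must appear in the solution.

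For (ii) I would first collect the structural facts governing how edges of different blocks interact. Since any two edges of $E_X$ both contain $X$, they overlap and are hence either nested or crossing; on a single crossing-free page the edges of a block thus form a nested chain, so each chain of $F_X$ occupies a ring-shaped region bounded on the inside by an edge of $\tilde E'_X$ and on the outside by an edge of $\tilde E''_X$. For two blocks $X\ne Y$ bridged on a common page of the original drawing, their ranges cannot partially overlap (that would force a crossing), so the family of bridged blocks on each page is laminar. Moreover, if $X\subsetneq Y$ then every $Y$-edge contains every $X$-edge, whence $f_X^q$ is contained in $e_Y^q$; and if the ranges are disjoint then $f_X^q$ and $f_Y^q$ are interval-disjoint. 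Consequently, as long as the footprint of each block on a page is sandwiched between an inner edge of $\tilde E'_X$ and an outer edge of $\tilde E''_X$, the footprints of different blocks nest or lie side by side exactly as their boundary edges do, and hence never cross.

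The obstacle, flagged already after \cref{lem:1-hitting-set}, is that the flow may break the original pairing of inner and outer boundary edges: a chain of $F_X$ may run from $e_X^a$ to $f_X^b$ with $a\ne b$, so we cannot simply redraw each page in place. To repair this I would, for each block $X$, place the chain with outer boundary $f_X^b$ on the original page~$b$ (this is a bijection onto $Q_X$, since each edge of $\tilde E''_X$ is outermost on exactly one page), while permuting the interiors accordingly. Concretely, split the original page $q\in Q_X$ into the $X$-annulus, the part inside $e_X^q$, and the part outside $f_X^q$; every edge of the page lies in exactly one part, and---crucially---the inside part consists only of edges of strictly smaller blocks $Y\subsetneq X$, while the outside part holds only larger or disjoint blocks (an edge missing all of $X$ would miss all hitting points). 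When the chain with boundaries $(e_X^a,f_X^b)$ goes on page~$b$, I keep the outside material of page~$b$ in place and move the \emph{entire} inside block of page~$a$ into the hole bounded by $e_X^a$. Because these interiors sit inside $e_X^a$ and the retained material sits outside $f_X^b\supseteq e_X^a$, the three parts occupy nested or disjoint regions and do not cross.

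It remains to perform these per-block exchanges consistently for all blocks at once. I would process the blocks in an order extending the containment order on ranges, smallest first. The key invariant is that when $X$ is processed, all blocks strictly inside $X$ have already reached their final form, so the inside parts relocated by the $X$-exchange are finalized sub-drawings that move atomically without introducing crossings; meanwhile larger and disjoint blocks live in the outside parts, which the $X$-exchange leaves untouched, and are disturbed only later by their own (strictly coarser) exchanges. The main difficulty, and the heart of the proof, is exactly this bookkeeping: verifying that fixing outer boundaries to their home pages while permuting the interiors yields a globally consistent page assignment, i.e.\ that the re-pairings chosen independently at the different levels of the containment hierarchy never force two edges of a common page to cross. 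Granting this, every page of the resulting drawing is crossing-free, it uses only the pages indexed by $\bigcup_X Q_X\subseteq[\pg]$, and by~(i) it contains at least $|\tilde E|$ edges, completing the proof.
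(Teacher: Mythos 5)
Your counting step and your per-block exchange coincide with the paper's own argument: the partition $\{E_X\}_{X\subseteq H}$ of $E(G)$, the observation that $\tilde E\cap E_X$ (which forms, on each page of $Q_X$, a nested chain from $e_X^q$ to $f_X^q$) is a feasible and compatible input for \cref{lem:1-hitting-set}, hence $|F_X|\ge|\tilde E\cap E_X|$, and the page surgery in which a chain of $F_X$ with boundaries $(e_X^a,f_X^b)$ is placed so that the two boundary edges act as shields between the chain, the retained material outside $f_X^b$, and the atomically relocated material inside $e_X^a$ --- this is exactly the paper's construction.

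The genuine gap is the step you flag and then ``grant'': that the exchanges for all blocks, performed simultaneously and scheduled smallest-first along the containment order, assemble into one globally consistent crossing-free page assignment. That is the heart of the lemma, and your proposal does not prove it. Your scheduling framework also has a real soft spot: bridged sets are contiguous runs of hitting points, and two incomparable, partially overlapping blocks (say bridging $\{a,b\}$ and $\{b,c\}$) can coexist on \emph{different} pages, so ``the containment order on ranges'' is only a partial order, and the interaction of incomparable blocks under your level-by-level re-pairing is precisely what remains unverified. The paper sidesteps all of this by making each exchange a self-contained transformation of the \emph{entire} drawing and composing them sequentially: the exchange for $X$ turns the current valid $\pg$-page drawing of $\tilde E$ into a valid drawing of $\hat E=(\tilde E\setminus E_X)\cup F_X$ with $|\hat E|\ge|\tilde E|$ and $\hat E\setminus E_X=\tilde E\setminus E_X$, and --- this is the missing observation --- although the encoding changes, the boundary sets $\tilde E'_Y$ and $\tilde E''_Y$ remain unchanged for \emph{every} block $Y$. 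The reason is that, for fixed $Y$, the $Y$-edges arriving on any new page come from a single atomically moved part (edges inside $e_X^a$ bridge sets $\subsetneq X$, while edges outside $f_X^b$ bridge supersets of $X$ or sets disjoint from $X$, and these cases are mutually exclusive), so every per-page $Y$-chain is relocated intact with its innermost and outermost edges preserved. Consequently each $F_Y$, computed once from the original encoding, stays applicable; later exchanges never disturb $E_X$ again since the blocks are disjoint; the blocks can be processed in arbitrary order with no hierarchy invariant; and since every edge of $\tilde E$ lies in some $E_X$, the final edge set is $\bigcup_X F_X$. With this sequential formulation your construction closes; without it, your global consistency claim is an assumption rather than a proof.
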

\begin{proof}
	Consider a fixed $X \subseteq H$ with $Q_X \ne \emptyset$.
	For each $q\in Q_X$, let \emph{$F_X^q$} be the set of edges in~$F_X$ 
	that appear on the same page as $e_X^q \in \tilde E'_X$ when using the algorithm
	of \cref{lem:1-hitting-set}. 
	Since each element of $\tilde E''_X$ is on a different page,
	let \emph{$\sigma$}$\colon Q_X \rightarrow Q_X$ be the permutation
	such that $f_X^{\sigma(q)}$ is the unique element of $\tilde E''_X$ in $F_X^q$.
	
	We make a drawing of $\hat E := (\tilde E\setminus E_X)\cup F_X$ on $\pg$
	pages by assigning edges to pages, as follows.
	For each $q\in [\pg]\setminus Q_X$, we just set $\hat E^q=\tilde E^q$. 
	For each $q\in Q_X$, let $\hat E^q$ be obtained from $\tilde E^{\sigma(q)}$
	by removing the edges contained in~$f_X^{\sigma(q)}$, adding the edges of~$F_X^q$,
	and adding the edges of~$\tilde E^q$ contained in~$e_X^q$.
    For an example, see \cref{fig:3-hitting-set-2}.
	For each $q$, the edges of~$\hat E^q$ can be drawn on a single page without crossings.
	This is obvious for $q\in [\pg]\setminus Q_X$.  For $q\in Q_X$,
	this is true because~$e^q_X$ and~$f_X^{\sigma(q)}$ act as shields between~$F_X^q$
	and the other two groups of edges, one containing $f_X^{\sigma(q)}$ and the
	other contained in $e^q_X$.

	Since $\tilde E\cap E_X = \big( \bigcup_{q\in Q_X} \tilde E^q \big)\cap E_X$ 
	is a feasible solution for the problem solved in \cref{lem:1-hitting-set},
	we have $|\tilde E\cap E_X|\le |F_X|$.  Therefore 
	\emph{$\hat E$}$\, = (\tilde E\setminus E_X)\cup F_X$ is at least as large as~$\tilde E$.

	\begin{figure}[tb]
	  \centering
	  \includegraphics[page=2]{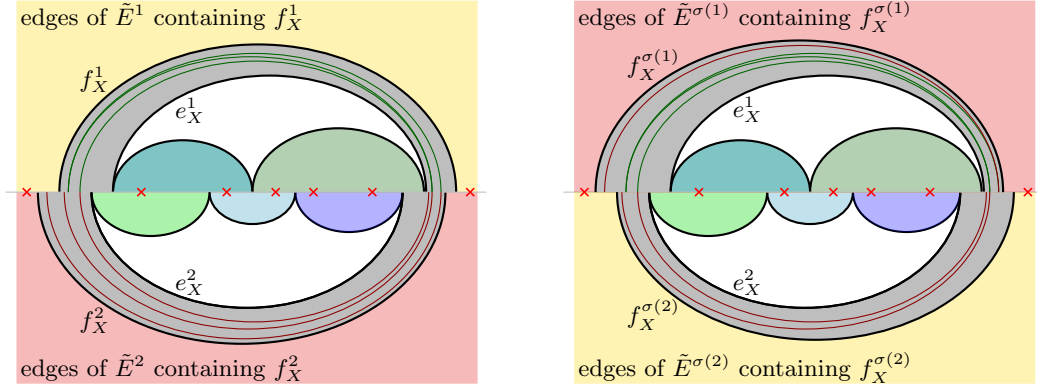}
	  \caption{Left: A 2-page drawing of~$\tilde E$.  The gray
            region corresponds to the set
            $\tilde E_X = \tilde E^1_X \cup \tilde E^2_X$ when $X$ is
            the set of the inner five red crosses.  Right: drawing of
            a set $\hat E = (\tilde E \setminus E_X)\cup F_X$ where
            $\sigma(1)=2$ and $\sigma(2)=1$.  Note that $\tilde E_X$
            and $\hat E_X$ can be different; namely if
            $F_X \ne \tilde E^1_X\cup \tilde E^2_X$.}
	  \label{fig:3-hitting-set-2}
	\end{figure}
	
	Summarizing: for a fixed $X$, we have converted $\tilde E$ into another set
	of edges $\hat E$ that is no smaller and can be drawn without crossings on \pg pages
	such that $F_X= \hat E\cap E_X$ and such that no edge outside $E_X$
	is changed (that is, $\tilde E\setminus E_X = \hat E\setminus E_X$).
	In general, the encoding $\langle \mathcal{E}^1,\dots, \mathcal{E}^\pg \rangle$ changes,
	but the sets $\tilde E_X', \tilde E''_X$ remain unchanged for every set~$X$.
	We now iterate this process for each $X \subseteq H$. 
    The last set $\hat E$ that we obtain is $\bigcup_{X} F_X$ because every edge of~$\tilde E$ is in $E_X$ for some $X \subseteq H$.
	The result follows.
\end{proof}

We now argue that, on a single page $q\in [\pg]$, 
the number of possible partial encodings~$\mathcal{E}^q$ is at most $m^{4h-2}$.
First note that $\mathcal{X}^q$ contains at most $2h-1$ sets: at most $h$ sets in $\mathcal{X}^q$ are inclusionwise minimal,
and any non-minimal element $X \in \mathcal{X}^q$ is obtained by 
joining two others.
This means that $\mathcal{E}^q$ is characterized by selecting
at most $4h-2$ edges $e_X^q$ and $f_X^q$, and such a selection
already determines implicitly the sets $\mathcal{X}^q$.
When considering all pages together, there are at most $m^{(4h-2)\cdot\pg}$
encodings $\langle \mathcal{E}^1,\dots,\mathcal{E}^\pg \rangle$,
and, for each $X \in \bigcup_{q\in [\pg]}\mathcal{X}^q$, we have to apply 
the algorithm of \cref{lem:1-hitting-set}, which takes
$\OO(|E_X|^3 \log n \log\log \pg)$ time.
Since the edge sets $E_X$ are pairwise disjoint for different $X\subseteq H$,
for each encoding we spend $\OO(m^3 \log n \log\log \pg)$ time.
Finally, we return the best among all encodings that give rise
to a valid drawing without crossings. 
Since the encoding of an optimal solution will be considered at least once,
\cref{lem:switching} implies that we find an optimal solution.
Therefore, the total running time is
$\OO(m^{(4h-2)\cdot \pg+3} \log n \log\log \pg)$.  
We summarize our result.

\begin{theorem}
  \label{thm:hitting-xp}
  \delppageplanar{\pg} is in XP 
  with respect to $h+\pg$.
\end{theorem}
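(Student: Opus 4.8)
The plan is to enumerate a compact combinatorial \emph{encoding} of how a surviving subgraph interacts with a fixed hitting set $H$, and, for each encoding, to reconstruct an optimal page assignment by calling the $h=1$ machinery of \cref{lem:1-hitting-set} separately on the edge classes $E_X$. First I would compute a hitting set $H$ of size $h$ (in $\OO(m\log m)$ time, as described above) with $H\cap\sigma(V(G))=\emptyset$, and recall that every edge of $G$ bridges a unique nonempty subset $X\subseteq H$, so $E(G)$ partitions into the disjoint classes $\{E_X\}_{X\subseteq H}$. The first key observation is that on a single page $q$ the family $\mathcal{X}^q$ of bridged subsets is laminar, and a laminar family over a ground set of size $h$ has at most $2h-1$ members (at most $h$ inclusionwise-minimal sets, the remaining sets being joins accounted for by the forest structure). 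Hence a partial encoding $\mathcal{E}^q=\{(X,e_X^q,f_X^q):X\in\mathcal{X}^q\}$ is determined by choosing at most $4h-2$ edges of $G$, namely the innermost and outermost representatives $e_X^q,f_X^q$; and this choice already determines $\mathcal{X}^q$, since each chosen edge bridges a determined subset of $H$. Counting over all $\pg$ pages yields at most $m^{(4h-2)\pg}$ encodings, which is polynomial in $m$ for fixed $h$ and $\pg$.

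For each encoding $\langle\mathcal{E}^1,\dots,\mathcal{E}^\pg\rangle$ I would first test consistency (laminarity and the containment relations $e_Y^q\supseteq f_X^q$ for $X\subsetneq Y$ on each page), discarding encodings that do not arise from an actual crossing-free drawing. Then, for each $X\subseteq H$ with $Q_X\neq\emptyset$, I would apply \cref{lem:1-hitting-set} to the class $E_X$ on $|Q_X|$ pages, using boundary sets $\tilde E_X'=\{e_X^q:q\in Q_X\}$ and $\tilde E_X''=\{f_X^q:q\in Q_X\}$. This application is legitimate because every edge of $E_X$ has its interval containing $[\min X,\max X]$ and hence the single point $\min X\in H$, so $E_X$ has a hitting set of size one. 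The returned sets $F_X$ are pairwise disjoint, so their union is a crossing-free $\pg$-page drawing, and I would keep the encoding whose union is largest; its complement is a smallest edge-deletion set.

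For the running time, since the classes $E_X$ are disjoint we have $\sum_X|E_X|\le m$, hence $\sum_X|E_X|^3\le m^3$, so each encoding costs $\OO(m^3\log n\log\log\pg)$, and the total is $\OO(m^{(4h-2)\pg+3}\log n\log\log\pg)$, which is XP in $h+\pg$. Correctness rests on \cref{lem:switching}: applied to an optimal crossing-free $\pg$-page drawing with edge set $\tilde E$ and encoding $\langle\mathcal{E}^1,\dots,\mathcal{E}^\pg\rangle$, it produces a drawing that is no smaller, uses the same boundary edges, and whose $X$-class is exactly the $F_X$ returned by \cref{lem:1-hitting-set}. Since the encoding of that optimal drawing is one of the candidates I enumerate, the procedure finds an optimum.

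The step I expect to be genuinely delicate — but which is already resolved by \cref{lem:switching} — is the decoupling across pages: the flow formulation of \cref{lem:1-hitting-set} cannot force a prescribed pairing of inner boundary edges with outer boundary edges across the $|Q_X|$ pages, so treating each page in isolation appears wrong. \cref{lem:switching} sidesteps this by permitting edges of a class $E_X$ to be freely redistributed among the pages of $Q_X$, with the shielding boundary edges guaranteeing that each page remains planar after the exchange; this is precisely what licenses handling each $E_X$ by one joint call rather than page by page. With that lemma in hand, the remaining ingredients — the laminarity bound, the disjointness of the classes, and the running-time arithmetic — are routine.
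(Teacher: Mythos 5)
Your proposal follows the paper's proof essentially verbatim: the same partition into bridging classes $E_X$, the same laminarity bound of $2h-1$ sets per page giving at most $m^{(4h-2)\pg}$ encodings, the same joint invocation of \cref{lem:1-hitting-set} per class with boundary sets $\tilde E_X'$ and $\tilde E_X''$, and \cref{lem:switching} as the correctness engine, arriving at the identical $\OO(m^{(4h-2)\pg+3}\log n\log\log\pg)$ running time. One small slip --- disjointness of the classes does not by itself make $\bigcup_X F_X$ crossing-free (edges bridging \emph{disjoint} subsets of $H$ can still cross, so your syntactic consistency checks alone may pass spurious encodings) --- is harmless provided that, as in the paper, you return only the best encoding whose assembled drawing is verified to be crossing-free, with \cref{lem:switching} guaranteeing the encoding of an optimal drawing passes this test.
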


\section{Multiple-Track Crossing Minimization}
\label{sec:multiple-track}

Let $G=(A \cup B,E)$ be a bipartite graph where all edges connect a
vertex of $A$ to a vertex of~$B$ and $A\cap B=\emptyset$.  
We further have a given linear order $\sigma_A$ for 
the vertices of $A$.  For the vertices of $B$ we
do not have any additional information or constraints.
In this section we consider \emph{spine+$\tr$-track drawings} of $G$, 
defined as follows:
\begin{itemize}
\item the vertices of $A$ are placed on a line $\ell_0$, 
  called \emph{spine}, in the order determined by $\sigma_A$;
\item the vertices of $B$ are placed on $\tr$ different 
  lines $\ell_1,\dots,\ell_\tr$ parallel to the spine; 
  each line~$\ell_q$ is placed on a different
  \emph{page} (half-plane) $\pi_q$ of a book;
\item all pages $\pi_1,\dots,\pi_\tr$ have $\ell_0$
  as a common boundary and are otherwise pairwise disjoint;
\item for each $q\in [\tr]$, the edges with endpoints in $\ell_0$
  and $\ell_q$ are drawn as straight-lines edges in the page $\pi_q$.
\end{itemize}
One can interpret this as a drawing in three dimension, 
as shown in \cref{fig:k-track}. Note that because the graph is bipartite
and each edge has a vertex in $A$ and a vertex in $B$, there are no edges
connecting two vertices in the spine, and in particular there are no ``nested'' edges.

To describe the drawing combinatorially, it suffices to partition
$B$ into sets $B_1,\dots,B_\tr$, one per line, and we have to decide for each $B_q$
the order $\sigma_{B_q}$ of the vertices $B_q$ along $\ell_q$.
The number of crossings of the drawing is the sum of the number of crossings within
each page, where the number of crossings within a page is the number of pairs of edges that cross each other.
The value $\CR_{\tr}((A,\sigma_A),B,E)$ is the minimum number 
of crossings over all spine+$\tr$-track drawings, and the purpose
of this section is to discuss its computation.

\begin{figure}[tb]
\centering
 \includegraphics[page=3]{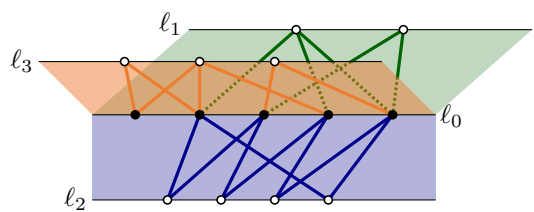}
 \caption{A spine+3-track drawing. In this example, $B_1$ has
	two vertices, $B_2$ has four vertices and $B_3$ has three vertices.
	The drawing has $2+5+2=9$ crossings.}
  \label{fig:k-track}
\end{figure}

We start discussing spine+$1$-track drawings and its 
corresponding value $\CR_1((A,\sigma_A),B,E)$.
See \cref{fig:1-track-1} for examples of drawings.
This is the minimum number of crossings in a two-layer drawing 
with the order on one layer, $A$ in this case, fixed.
We want to choose the order $\sigma_B$
that minimizes the number of crossings.  Let
$\CR_1((A,\sigma_A),(B,\sigma_B),E)$ be the crossing number for a fixed
order $\sigma_B$.
Then $\CR_1((A,\sigma_A),B,E)$ is the minimum of 
$\CR_1((A,\sigma_A),(B,\sigma_B),E)$ when we optimize over 
all orders $\sigma_B$ of $B$.
The obvious approach is to try all different possible orders
$\sigma_B$ of $B$, compute $\CR_1((A,\sigma_A),(B,\sigma_B),E)$ for
each of them, and take the minimum.  
This yields an algorithm with time
complexity $2^{\OO(n\log n)}$.  We improve over this trivial algorithm
as follows.

\begin{figure}[tb]
  \centering
  \includegraphics[page=1]{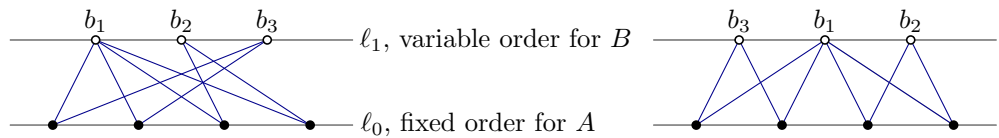}
  \caption{Two different orders $\sigma_B$ give different number of 
  crossings in the spine+$1$-track drawing: 
  10 on the left and 2 on the right.}
  \label{fig:1-track-1}
\end{figure}

\begin{theorem}
  \label{thm:1-track}
  We can compute $\CR_1((A,\sigma_A),B,E)$ in $\OO(2^n n)$ time, 
  where $n=|A|+|B|$.
\end{theorem}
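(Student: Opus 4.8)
The plan is to recognize this as the classical \emph{one-sided crossing minimization} problem (the order on layer $A$ is fixed, and we optimize the order of $B$) and to exploit the fact that the crossing count decomposes into pairwise contributions, which lets us replace the naive $2^{\OO(n\log n)}$ enumeration of all orders $\sigma_B$ by a Held--Karp-style dynamic program over subsets of~$B$.

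\textbf{Pairwise decomposition.} First I would observe that two edges $(a,b)$ and $(a',b')$ with $a\neq a'$ and $b\neq b'$ cross in a spine+$1$-track drawing if and only if their endpoints appear in opposite orders on $\ell_0$ and $\ell_1$. Since $\sigma_A$ is fixed, once the relative order of $b$ and $b'$ on $\ell_1$ is fixed, the number of crossings among the edges incident to $b$ and to $b'$ is determined. Accordingly, for $b,b'\in B$ let $c_{b,b'}$ be the number of pairs $(a,a')$ with $a$ a neighbor of $b$, $a'$ a neighbor of $b'$, and $\sigma_A(a)>\sigma_A(a')$; this is exactly the number of crossings that $b$ and $b'$ contribute when $b$ is placed to the left of $b'$ (note $a=a'$ contributes nothing, consistent with edges that share an endpoint not crossing). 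Then for every order $\sigma_B$,
\[
  \CR_1((A,\sigma_A),(B,\sigma_B),E) \;=\; \sum_{\substack{b,b'\in B\\ \sigma_B(b)<\sigma_B(b')}} c_{b,b'},
\]
since each crossing is counted exactly once, under the pair of its two distinct $B$-endpoints. All values $c_{b,b'}$ over ordered pairs can be precomputed in polynomial time (for instance, by sorting each neighborhood by $\sigma_A$ and counting inversions), which will be dominated by the target bound.

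\textbf{Subset dynamic program.} For $S\subseteq B$, let $D[S]$ be the minimum, over all orders of $S$ placed in the leftmost positions, of $\sum_{\sigma_B(b)<\sigma_B(b'),\,b,b'\in S} c_{b,b'}$. Conditioning on the vertex $b$ placed rightmost in $S$ yields the recurrence
\[
  D[S] \;=\; \min_{b\in S}\Big( D[S\setminus\{b\}] + \sum_{b'\in S\setminus\{b\}} c_{b',b}\Big),
\]
with $D[\emptyset]=0$, and the answer is $D[B]=\CR_1((A,\sigma_A),B,E)$. Correctness follows because the optimal order of $S\setminus\{b\}$ accounts for all crossings internal to $S\setminus\{b\}$, while placing $b$ last means every $b'\in S\setminus\{b\}$ precedes it, contributing $\sum_{b'} c_{b',b}$.

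\textbf{Achieving the $\OO(2^n n)$ bound.} I expect the main obstacle to be shaving the running time to a single factor of $n$: evaluating the inner sum of the recurrence naively costs $\OO(|B|)$ per transition, giving $\OO(2^{|B|}|B|^2)$. To remove one factor of $|B|$, I would precompute $\mathrm{cost}(b,S):=\sum_{b'\in S} c_{b',b}$ for every $b\in B$ and $S\subseteq B$. For a fixed $b$, all $2^{|B|}$ values are obtained by a standard incremental subset sum (adding one element at a time) in $\OO(2^{|B|})$ total time, so the whole table costs $\OO(2^{|B|}|B|)$; setting $c_{b,b}=0$ lets us use $\mathrm{cost}(b,S)$ in place of $\mathrm{cost}(b,S\setminus\{b\})$. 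With this table, each transition becomes $\OO(1)$, so computing $D[\cdot]$ over all subsets costs $\sum_{S\subseteq B}|S|=\OO(2^{|B|}|B|)$. Since $|B|\le n$ and $2^{|B|}\le 2^n$, the total time is $\OO(2^n n)$, absorbing the polynomial precomputation of the $c_{b,b'}$. (The algorithm is in fact $\OO(2^{|B|}|B|)$, which is never worse than the stated bound.)
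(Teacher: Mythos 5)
Your proposal is correct and takes essentially the same route as the paper: the same pairwise decomposition into costs $c_{b,b'}$ (which the paper phrases as weights of a complete digraph on $B$ and a Hamiltonian-path/TSP-style view), the same last-vertex subset recurrence, and the same incremental precomputation of the partial sums $\sum_{b'\in S} c_{b',b}$ (the paper's table $\eta[X,y]$) to make each transition $\OO(1)$ and reach the $\OO(2^n n)$ bound.
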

\begin{proof}
  Construct a complete, directed, edge-weighted graph $H$ as follows:
  \begin{itemize}
  \item $V(H)=B$
  \item put all directed edges in $H$;
  \item the directed edge $(x,y)$ of $H$ gets weight
    $c_{x,y}=\CR_1((A,\sigma_A),(\{ x,y\},\sigma_{x,y}),E)$, where
    $\sigma_{x,y}$ is the order for $\{ x,y \}$ that places $x$ before
    $y$.
  \end{itemize}
  An ordering of $B$ corresponds to a Hamiltonian path in $H$.
  Consider any Hamiltonian path in $H$ defined by an order
  $\sigma_B$.  Since each crossing happens between two edges incident
  to different vertices of $B$, we have
  \begin{equation}\label{eq:cr2}
    \CR_1((A,\sigma_A),(B,\sigma_B),E) ~=~ \sum_{\begin{minipage}{2cm}\scriptsize\centering
        $x,y\in B$\\ $\sigma_B(x)<\sigma_B(y)$\end{minipage}} c_{x,y}
    ~=~ \sum_{x\in B}%
    \sum_{\begin{minipage}{2cm}\scriptsize\centering
          $y \in B$\\ $\sigma_B(x)<\sigma_B(y)$\end{minipage}} c_{x,y}.%
  \end{equation}
  With this interpretation, the task is to find in $H$ a Hamiltonian
  path such that the sum of the $c_{\cdot,\cdot}$-weights from each
  vertex to all its successors is minimized.  This problem is amenable
  to dynamic programming across subsets of vertices, as it is done for
  the \textsc{Traveling Salesperson Problem}; see \cite{Bellman62} or
  \cite[Section 6.6]{DasguptaPV-book}.

  We define a table $C$ by setting, for each $X\subseteq B$,
  \[
    C[X] ~=~ \CR_1((A,\sigma_A),X,\{ (a,x)\in E \mid x\in X, a \in A \}).
  \]
  Then $C[X]$ is the number of crossings when we remove the vertices
  $B\setminus X$ from $H$.  We are interested in computing 
  $C[B]$ because $C[B]= \CR_1((A,\sigma_A),B,E)$.
	
  We obviously have $C[X]=0$ for each $X\subseteq B$ with $|X|\le 1$.
  Whenever $|X|> 1$, we use~\eqref{eq:cr2} and the definition of
  $C[X]$ to obtain the recurrence
  \begin{equation}\label{eq:recurrencecr}
    C[X] ~=~ \min_{y\in X} \left(C[X\setminus \{y\}] + \sum_{x\in X\setminus
      \{y\} } c_{x,y}\right).
  \end{equation}
  The proof of this is a standard proof in dynamic programming, where
  $y$ represents the last vertex of $X$ in the ordering\arxiv{;
  see \cref{fig:2tracks-2}.

  \begin{figure}[tb]
    \centering
	  \includegraphics[page=2]{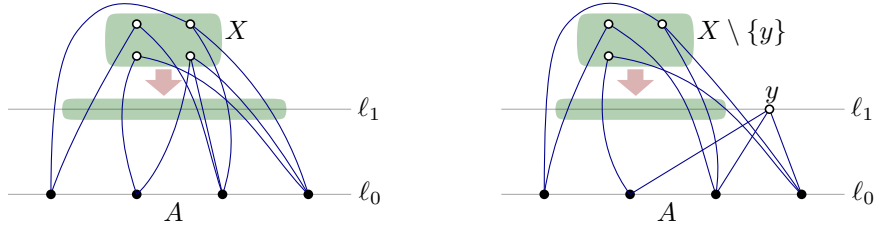}
	  \caption{Schema showing $C[X]$ and what happens when the last 
	   vertex of $X$ gets fixed.}
  	 \label{fig:2tracks-2}
  \end{figure}
  }{.}
  
  Each value $c_{x,y}$ can be computed in
  $\OO(\deg_G(x)+\deg_G(y))=\OO(n)$ time, which means that, over all
  pairs $(x,y)$, we spend $\OO(n^3)$ time.  Each value
  $\eta[X,y]:=\sum_{x\in X\setminus \{y\} } c_{x,y}$, defined for
  $X\subseteq B$ and $y\in X$, can be computed for increasing values of
  $|X|$ in constant time per value by noting that
  \[
    \text{for every $X\subseteq B$ and distinct $y,z\in X$:}~~~ \sum_{x\in
      X\setminus \{y\} } c_{x,y} = c_{z,y} + \sum_{x\in X\setminus
      \{y,z\} } c_{x,y}.
  \]
  Therefore, we compute the value $\eta[X,y]$ for every $X\subseteq B$
  and $y\in X$ in
  $\Theta\left(\sum_{k=0}^n\binom{n}{k}k \right)= \Theta(2^n n)$ total time.
  (The direct computation using the sums anew for each value would
  take
  $\Theta\left( \sum_{k=0}^n\binom{n}{k}k^2 \right)= \Theta(2^n n^2)$,
  which is strictly larger.)

  After this we can compute $C[X]$ for increasing values of $|X|$
  using the recurrence of \cref{eq:recurrencecr},
  which means that we spend $\OO(|X|)$ time for each $X$. 
  This step also takes $\OO(2^n n)$ time for all $X$.  
  Finally we return $C[B]$. An optimal solution
  can be recovered using standard book-keeping techniques.
\end{proof}

Now we consider the case of arbitrary track number~$\tr$.

\begin{theorem}
  \label{thm:k_tracks}
  We can compute $\CR_{\tr}((A,\sigma_A),B,E)$ in $2^n n^{\OO(1)}$ time 
  for every $\tr>1$, where $n=|A|+|B|$.
  For $\tr=1$ and $\tr=2$, the value can be computed in $O(2^n n)$ time.
\end{theorem}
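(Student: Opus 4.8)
The plan is to reduce the $\tr$-track problem to a minimum set-partition problem over $B$ and then solve it by iterated subset convolution, reusing the single-track dynamic program of \cref{thm:1-track}.

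First I would observe that, since the pages $\pi_1,\dots,\pi_\tr$ are pairwise disjoint half-planes, no two edges lying on different pages can cross; hence the total number of crossings of a spine+$\tr$-track drawing is the sum of the crossings contributed by each page separately. Consequently, once we fix the partition $B = X_1 \sqcup \cdots \sqcup X_\tr$ assigning the vertices of $B$ to tracks, the optimal orders on the tracks can be chosen independently, and each page contributes exactly $\CR_1((A,\sigma_A), X_q, E_{X_q})$, where $E_{X_q}$ is the set of edges incident to $X_q$. Writing $g(X) := \CR_1((A,\sigma_A), X, \{(a,x)\in E \mid x\in X\})$ for $X\subseteq B$, this gives
\[
  \CR_\tr((A,\sigma_A),B,E) = \min\Big\{ \textstyle\sum_{q=1}^{\tr} g(X_q) \;:\; B = X_1\sqcup\cdots\sqcup X_\tr \Big\},
\]
where empty parts are allowed (an empty track contributes $g(\emptyset)=0$), so the minimum is effectively over drawings using at most $\tr$ tracks.

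Next I would compute $g(X)$ for \emph{every} $X\subseteq B$ at once. This is exactly the table $C[\cdot]$ of \cref{thm:1-track}: its dynamic program fills in $C[X]=g(X)$ for all subsets $X\subseteq B$ by increasing cardinality in $\OO(2^n n)$ total time. With $g$ tabulated on all subsets, the remaining task is the minimum set-partition displayed above, which is precisely the $(\min,+)$ subset convolution: setting $g^{(1)}=g$ and $g^{(j)}=g^{(j-1)}*g$ (convolution in the $(\min,+)$ semiring over subsets of $B$), one obtains $g^{(\tr)}(B)=\CR_\tr$. Since any drawing has at most $\binom{|E|}{2}\le n^4$ crossings, $g$ and all the $g^{(j)}$ take integer values in a polynomially bounded range, so by~\cite{BjorklundHKK07} each convolution can be evaluated on all subsets in $2^n n^{\OO(1)}$ time; performing $\tr-1\le n$ of them yields the claimed $2^n n^{\OO(1)}$ bound for every $\tr>1$. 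For the sharper small cases I avoid the general convolution machinery: $\tr=1$ is \cref{thm:1-track}, and for $\tr=2$ we only need $g^{(2)}$ at the single point $B$, namely $\min_{X\subseteq B}\big(g(X)+g(B\setminus X)\big)$, which after the $\OO(2^n n)$ tabulation of $g$ is a single pass over the $2^n$ subsets costing only $\OO(2^n)$ more, for $\OO(2^n n)$ overall.

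The main obstacle I expect is not the reduction itself—which is immediate from the page-disjointness of crossings—but verifying the hypotheses that make the convolution efficient: that the single-track dynamic program genuinely produces $g$ on all of $2^B$ within budget, and that the crossing counts remain in a polynomially bounded integer range, so that the $(\min,+)$ subset convolution of~\cite{BjorklundHKK07} runs in $2^n n^{\OO(1)}$ rather than incurring an extra factor depending on the value range.
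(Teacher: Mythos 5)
Your proposal is correct and matches the paper's proof essentially step for step: the same reduction to a minimum-cost partition of $B$ via independence of crossings across pages, the same tabulation of $C[X]=\CR_1((A,\sigma_A),X,\cdot)$ over all subsets in $\OO(2^n n)$ time using \cref{thm:1-track}, the same direct $\OO(2^n)$ sweep for $\tr=2$, and the same iterated $(\min,+)$ subset convolution of~\cite{BjorklundHKK07} for $\tr>2$, justified by the polynomial bound $n^4$ on crossing values. No gaps; nothing further to add.
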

\begin{proof}
  Once we fix a set $B_q$ for the $q$th page, we can optimize the order
  $\sigma_{B_q}$ independently of all other decisions.
  Therefore, we want to compute 
  \[
	\min~~\sum_{q=1}^\tr \CR_1((A,\sigma_A),B_q,E_q),
  \]
  where $E_q$ is the set of edges connecting vertices from $A$ to $B_q$,
  and where the minimum is only over all the partitions 
  $B_1,\dots,B_\tr$ of $B$.
  
  As we did in the proof of Theorem~\ref{thm:1-track},
  for each subset $X\subseteq B$, we define 
  \[
    C[X] ~=~ \CR_1((A,\sigma_A),X,\{ (a,x)\in E \mid x\in X, a \in A \}).
  \]
  In the proof of Theorem~\ref{thm:1-track} we argued
  that the values $C[X]$ can be computed in $\OO(2^n n)$ time
  for all $X\subseteq B$ simultaneously.

  We have to compute now 
  \[
	\min \left\{ \sum_{q=1}^\tr C[B_q] \colon
		\text{$B_1,\dots,B_\tr$ is a partition of $B$} \right\}.
  \]
  The case of $\tr=1$ has been covered in Theorem~\ref{thm:1-track}.
  For $\tr=2$, we have to compute 
  \[
	\min \left\{ C[B_1]+ C[B\setminus B_1] \mid 
		\text{$B_1\subseteq B$} \right\},
  \]
  which can be done in $O(2^n)$ additional time iterating
  over all subsets $B_1$ of $B$.

  For $\tr>2$, we use the algorithm of 
  Bj{\"{o}}rklund et al.~\cite{BjorklundHKK07} for subset convolution,
  as follows.
  Define for each $X\subseteq B$ and for $q\in[\tr]$ the 
  ``entry table''
  \begin{align*}
    T[X,q] ~&=~ \CR_q((A,\sigma_A),X,\{ (a,x)\in E \mid x\in X, a \in A \}) \\
		~&=~ \min \left\{ C[B_1]+\ldots+C[B_q] \mid 
		\text{$B_1,\dots,B_q$ is a partition of $X$} \right\}.
  \end{align*}
  We obviously have $T[X,1] = C[X]$ for all $X$.
  For $q>1$, we have the recursive relation
  \begin{align*}
	T[X,q] ~&=~ \min \left\{ T[Y,q-1] + C[X\setminus Y] \mid 
		Y\subseteq X \right\}.
  \end{align*}
  Therefore, for $q>1$, the function $X\mapsto T[X,q]$ is, 
  by definition, the subset convolution of the functions 
  $X\mapsto T[X,q-1]$ and $X\mapsto C[X]$ in the $(\min,+)$ ring. 
  These functions take integer values
  on $\{0,\dots, n^4\}$ because $n^4$ is an upper bound for 
  $\CR_q((A,\sigma_A),B,E)$ for any~$q$. 
  It follows from~\cite{BjorklundHKK07} that one can obtain 
  $T[X,q]$ for all $X\subseteq B$ in $2^n n^{\OO(1)}$ time,
  assuming that $T[\cdot,q-1]$ and $C[\cdot]$ are already
  available. We compute the entries $T[\cdot,q]$ 
  for $q=2,\dots, \tr$, which adds a multiplicative
  $\tr\le n$ to the final running time.  
\end{proof}

Using the theorem for increasing values of~\tr, we obtain the following.

\begin{corollary}
  \label{cor:min-track}
  We can compute the smallest value $\tr$ such that 
  $\CR_{\tr}((A,\sigma_A),B,E)=0$ in $2^n \cdot n^{\OO(1)}$ time,
  where $n=|A|+|B|$.
\end{corollary}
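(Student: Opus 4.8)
The plan is to invoke \cref{thm:k_tracks} repeatedly for increasing values of $\tr$ and stop as soon as the returned crossing number equals zero. The key observation that makes this work is monotonicity: if a spine+$\tr$-track drawing exists with zero crossings, then so does a spine+$(\tr+1)$-track drawing with zero crossings (simply leave one track empty, or equivalently move one vertex to its own track). Hence $\CR_\tr((A,\sigma_A),B,E)$ is nonincreasing in $\tr$, and there is a well-defined smallest $\tr^\star$ at which it first reaches $0$. Moreover, since every vertex of $B$ can always be placed on its own private track, we have $\CR_{|B|}((A,\sigma_A),B,E)=0$, so $\tr^\star \le |B| \le n$; thus the search terminates within $n$ iterations.

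First I would note that for each fixed $\tr$, \cref{thm:k_tracks} computes $\CR_\tr((A,\sigma_A),B,E)$ in $2^n \cdot n^{\OO(1)}$ time. I would then run this computation for $\tr = 1, 2, 3, \dots$ in turn, testing after each whether the returned value is $0$, and output the first $\tr$ for which it is. By the monotonicity argument above, this first $\tr$ equals the desired smallest value $\tr^\star$.

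For the running time, the loop performs at most $n$ iterations (since $\tr^\star \le n$), and each iteration costs $2^n \cdot n^{\OO(1)}$ by \cref{thm:k_tracks}. The total is therefore $n \cdot 2^n \cdot n^{\OO(1)} = 2^n \cdot n^{\OO(1)}$, as claimed. A small efficiency refinement is worth mentioning but is not needed for the stated bound: the table values $C[X]$ from the proof of \cref{thm:1-track} do not depend on $\tr$, so they can be computed once and reused across all iterations; only the entry tables $T[\cdot,q]$ need to be extended as $q$ grows, and these are already computed for all $q$ up to $\tr$ within a single call to \cref{thm:k_tracks}. Consequently a single invocation of the $\tr=\tr^\star$ case already produces the answer, and the naive loop is only a conceptually clean wrapper.

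I do not anticipate a genuine obstacle here, as the statement is a direct consequence of \cref{thm:k_tracks} together with monotonicity. The only point requiring a line of justification is the monotonicity of $\CR_\tr$ in $\tr$ and the bound $\tr^\star \le n$, both of which follow immediately from the fact that adding an empty track cannot create crossings and that assigning every vertex of $B$ to a distinct track eliminates all crossings.
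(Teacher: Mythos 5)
Your proposal is correct and matches the paper's own (very terse) argument: the paper simply applies \cref{thm:k_tracks} for increasing values of $\tr$ until the crossing number is zero. Your added justifications---monotonicity of $\CR_\tr$ in $\tr$, the bound $\tr^\star \le |B| \le n$, and the observation that a single call already yields $T[\cdot,q]$ for all $q\le\tr$---are exactly the details the paper leaves implicit.
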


\section{Open Problems}
\label{sec:open}

\begin{enumerate}
\item Could we use the concept of the conflict graph for other
  crossing reduction problems?
\item Is \delpplanar\ $W[1]$-hard with respect to the natural
  parameter~$k$ if \pl is part of the input?  Can we reduce from
  \textsc{Independent Set}?  Note that \deldegp is $W[1]$-hard with
  respect to treewidth \cite{Betzler2012} and that outer-\pl planar
  graphs have treewidth $\OO(\pl)$ \cite{ckllw-bo-GD17} (which also
  follows from \cref{lem:separator}).
\item What if the vertex order is not given? In other words, what is
  the parameterized complexity of edge deletion to outer-\pl planarity?
\item What about exact algorithms for computing the crossing number of
  an ordered graph?  As Masuda et al.~\cite{MasudaNKF90} showed, the
  problem is NP-hard for two pages.  %
  In their NP-hardness reduction, they use a large number of crossings, and it is easy to get an algorithm that is exponential in the number of edges;
  see \cref{thm:fopn_exact}.
  Can we get a running time of $2^{n}\cdot n^{\OO(1)}$ or perhaps even subexponential in $n$?
  Recall that the algorithm of Liu et al.~\cite{LiuCHW21} checks in $n \cdot (\CR + 2)^{\OO(\pw^2)}$ time whether a graph with %
  pathwidth~$\pw$ can be drawn on a given number of pages with at most $\CR$ crossings in total.
\end{enumerate}

\bibliographystyle{abbrvurl}
\bibliography{gdexact}

\begin{thebibliography}{10}

\bibitem{agot-fmcfds-MP92}
R.~K. Ahuja, A.~V. Goldberg, J.~B. Orlin, and R.~E. Tarjan.
\newblock Finding minimum-cost flows by double scaling.
\newblock {\em Math. Progr.}, 53:243--266, 1992.
\newblock \href {https://doi.org/10.1007/BF01585705}
  {\path{doi:10.1007/BF01585705}}.

\bibitem{abgkp-mpng-TCS22}
J.~Alam, M.~A. Bekos, M.~Gronemann, M.~Kaufmann, and S.~Pupyrev.
\newblock The mixed page number of graphs.
\newblock {\em Theoret. Comput. Sci.}, 931:131--141, 2022.
\newblock \href {https://doi.org/10.1016/j.tcs.2022.07.036}
  {\path{doi:10.1016/j.tcs.2022.07.036}}.

\bibitem{Bachmann23}
P.~Bachmann, I.~Rutter, and P.~Stumpf.
\newblock On the 3-coloring of circle graphs.
\newblock In M.~Bekos and M.~Chimani, editors, {\em Proc. Int. Symp. Graph
  Drawing \& Network Vis. (GD)}, volume 14465 of {\em LNCS}, pages 152--160.
  Springer, 2023.
\newblock URL: \url{https://arxiv.org/abs/2309.02258}, \href
  {https://doi.org/10.1007/978-3-031-49272-3_11}
  {\path{doi:10.1007/978-3-031-49272-3_11}}.

\bibitem{Bellman62}
R.~Bellman.
\newblock Dynamic programming treatment of the travelling salesman problem.
\newblock {\em J. {ACM}}, 9(1):61--63, 1962.
\newblock \href {https://doi.org/10.1145/321105.321111}
  {\path{doi:10.1145/321105.321111}}.

\bibitem{Betzler2012}
N.~Betzler, R.~Bredereck, R.~Niedermeier, and J.~Uhlmann.
\newblock On bounded-degree vertex deletion parameterized by treewidth.
\newblock {\em Discrete Appl. Math.}, 160(1):53--60, 2012.
\newblock \href {https://doi.org/j.dam.2011.08.013}
  {\path{doi:j.dam.2011.08.013}}.

\bibitem{BhoreGMN20}
S.~Bhore, R.~Ganian, F.~Montecchiani, and M.~N{\"{o}}llenburg.
\newblock Parameterized algorithms for book embedding problems.
\newblock {\em J. Graph Algorithms Appl.}, 24(4):603--620, 2020.
\newblock \href {https://doi.org/10.7155/jgaa.00526}
  {\path{doi:10.7155/jgaa.00526}}.

\bibitem{BjorklundHKK07}
A.~Bj{\"{o}}rklund, T.~Husfeldt, P.~Kaski, and M.~Koivisto.
\newblock {F}ourier meets {M}{\"{o}}bius: Fast subset convolution.
\newblock In D.~S. Johnson and U.~Feige, editors, {\em Proc. 39th Ann. {ACM}
  Symp. Theory Comput. (STOC)}, pages 67--74, 2007.
\newblock \href {https://doi.org/10.1145/1250790.1250801}
  {\path{doi:10.1145/1250790.1250801}}.

\bibitem{cm-a1e-SICOMP13}
S.~Cabello and B.~Mohar.
\newblock Adding one edge to planar graphs makes crossing number and
  1-planarity hard.
\newblock {\em SIAM J. Comput.}, 42(5):1803--1829, 2013.
\newblock \href {https://doi.org/10.1137/120872310}
  {\path{doi:10.1137/120872310}}.

\bibitem{ckllw-bo-GD17}
S.~Chaplick, M.~Kryven, G.~Liotta, A.~L{\"o}ffler, and A.~Wolff.
\newblock Beyond outerplanarity.
\newblock In F.~Frati and K.-L. Ma, editors, {\em Proc. 25th Int. Symp. Graph
  Drawing \& Network Vis. (GD)}, volume 10692 of {\em LNCS}, pages 546--559.
  Springer, 2018.
\newblock URL: \url{https://arxiv.org/abs/1708.08723}, \href
  {https://doi.org/10.1007/978-3-319-73915-1_42}
  {\path{doi:10.1007/978-3-319-73915-1_42}}.

\bibitem{chung1987embedding}
F.~R.~K. Chung, F.~T. Leighton, and A.~L. Rosenberg.
\newblock Embedding graphs in books: A layout problem with applications to
  {VLSI} design.
\newblock {\em SIAM J. Algebr. Discrete Meth.}, 8(1):33--58, 1987.
\newblock \href {https://doi.org/10.1137/0608002} {\path{doi:10.1137/0608002}}.

\bibitem{DasguptaPV-book}
S.~Dasgupta, C.~H. Papadimitriou, and U.~V. Vazirani.
\newblock {\em Algorithms}.
\newblock McGraw-Hill, 2008.

\bibitem{DonkersJW22}
H.~Donkers, B.~M.~P. Jansen, and M.~W{\l}odarczyk.
\newblock Preprocessing for outerplanar vertex deletion: An elementary kernel
  of quartic size.
\newblock {\em Algorithmica}, 84(11):3407--3458, 2022.
\newblock \href {https://doi.org/10.1007/s00453-022-00984-2}
  {\path{doi:10.1007/s00453-022-00984-2}}.

\bibitem{DBLP:journals/jct/DvorakN19}
Z.~Dvo\v{r}{\'{a}}k and S.~Norin.
\newblock Treewidth of graphs with balanced separations.
\newblock {\em J. Comb. Theory, Ser. {B}}, 137:137--144, 2019.
\newblock \href {https://doi.org/10.1016/j.jctb.2018.12.007}
  {\path{doi:10.1016/j.jctb.2018.12.007}}.

\bibitem{FellowsGMN11}
M.~R. Fellows, J.~Guo, H.~Moser, and R.~Niedermeier.
\newblock A generalization of {Nemhauser and Trotter}'s local optimization
  theorem.
\newblock {\em J. Comput. Syst. Sci.}, 77(6):1141--1158, 2011.
\newblock \href {https://doi.org/10.1016/j.jcss.2010.12.001}
  {\path{doi:10.1016/j.jcss.2010.12.001}}.

\bibitem{Gavril73}
F.~Gavril.
\newblock Algorithms for a maximum clique and a maximum independent set of a
  circle graph.
\newblock {\em Networks}, 3(3):261--273, 1973.
\newblock \href {https://doi.org/10.1002/net.3230030305}
  {\path{doi:10.1002/net.3230030305}}.

\bibitem{g-ccnqt-JCSS04}
M.~Grohe.
\newblock Computing crossing numbers in quadratic time.
\newblock {\em J. Comput. Syst. Sci.}, 68(2):285--302, 2004.
\newblock \href {https://doi.org/10.1016/j.jcss.2003.07.008}
  {\path{doi:10.1016/j.jcss.2003.07.008}}.

\bibitem{JansenLS14}
B.~M.~P. Jansen, D.~Lokshtanov, and S.~Saurabh.
\newblock A near-optimal planarization algorithm.
\newblock In C.~Chekuri, editor, {\em Proc. Ann. {ACM-SIAM} Symp. Discrete
  Algorithms (SODA)}, pages 1802--1811, 2014.
\newblock \href {https://doi.org/10.1137/1.9781611973402.130}
  {\path{doi:10.1137/1.9781611973402.130}}.

\bibitem{DBLP:conf/stoc/Jansen22}
B.~M.~P. Jansen and M.~W{\l}odarczyk.
\newblock Lossy planarization: a constant-factor approximate kernelization for
  planar vertex deletion.
\newblock In S.~Leonardi and A.~Gupta, editors, {\em Proc. 54th Ann. {ACM}
  Symp. Theory Comput. (STOC)}, pages 900--913, 2022.
\newblock \href {https://doi.org/10.1145/3519935.3520021}
  {\path{doi:10.1145/3519935.3520021}}.

\bibitem{Kawarabayashi09}
K.~Kawarabayashi.
\newblock Planarity allowing few error vertices in linear time.
\newblock In {\em Proc. Ann. {IEEE} Symp. Foundat. Comput. Sci. (FOCS)}, pages
  639--648, 2009.
\newblock \href {https://doi.org/10.1109/FOCS.2009.45}
  {\path{doi:10.1109/FOCS.2009.45}}.

\bibitem{kr-ccnlt-STOC07}
K.~Kawarabayashi and B.~A. Reed.
\newblock Computing crossing number in linear time.
\newblock In D.~S. Johnson and U.~Feige, editors, {\em Proc. 39th Ann. {ACM}
  Symp. Theory Comput. (STOC)}, pages 382--390, 2007.
\newblock \href {https://doi.org/10.1145/1250790.1250848}
  {\path{doi:10.1145/1250790.1250848}}.

\bibitem{kt-fssfp-Algorithmica15}
Y.~Kobayashi and H.~Tamaki.
\newblock A fast and simple subexponential fixed parameter algorithm for
  one-sided crossing minimization.
\newblock {\em Algorithmica}, 72:778--790, 2015.
\newblock \href {https://doi.org/10.1007/s00453-014-9872-x}
  {\path{doi:10.1007/s00453-014-9872-x}}.

\bibitem{kt-ffpt2-IPL16}
Y.~Kobayashi and H.~Tamaki.
\newblock A faster fixed parameter algorithm for two-layer crossing
  minimization.
\newblock {\em Inform. Process. Lett.}, 116(9):547--549, 2016.
\newblock \href {https://doi.org/j.ipl.2016.04.012}
  {\path{doi:j.ipl.2016.04.012}}.

\bibitem{DBLP:journals/corr/abs-2304-14724}
M.~Lampis and M.~Vasilakis.
\newblock Structural parameterizations for two bounded degree problems
  revisited.
\newblock {\em CoRR}, abs/2304.14724, 2023.
\newblock \href {https://doi.org/10.48550/arXiv.2304.14724}
  {\path{doi:10.48550/arXiv.2304.14724}}.

\bibitem{LiuCH20}
Y.~Liu, J.~Chen, and J.~Huang.
\newblock Parameterized algorithms for fixed-order book drawing with bounded
  number of crossings per edge.
\newblock In W.~Wu and Z.~Zhang, editors, {\em Proc. 14th Int. Conf. Combin.
  Optim. Appl. (COCOA)}, volume 12577 of {\em LNCS}, pages 562--576. Springer,
  2020.
\newblock \href {https://doi.org/10.1007/978-3-030-64843-5\_38}
  {\path{doi:10.1007/978-3-030-64843-5\_38}}.

\bibitem{LiuCHW21}
Y.~Liu, J.~Chen, J.~Huang, and J.~Wang.
\newblock On parameterized algorithms for fixed-order book thickness with
  respect to the pathwidth of the vertex ordering.
\newblock {\em Theor. Comput. Sci.}, 873:16--24, 2021.
\newblock \href {https://doi.org/10.1016/j.tcs.2021.04.021}
  {\path{doi:10.1016/j.tcs.2021.04.021}}.

\bibitem{lokshtanov14}
D.~Lokshtanov, N.~S. Narayanaswamy, V.~Raman, M.~S. Ramanujan, and S.~Saurabh.
\newblock Faster parameterized algorithms using linear programming.
\newblock {\em ACM Trans. Algorithms}, 11(2):15:1--15:31, 2014.
\newblock \href {https://doi.org/10.1145/2566616} {\path{doi:10.1145/2566616}}.

\bibitem{MarxS12}
D.~Marx and I.~Schlotter.
\newblock Obtaining a planar graph by vertex deletion.
\newblock {\em Algorithmica}, 62(3-4):807--822, 2012.
\newblock \href {https://doi.org/10.1007/s00453-010-9484-z}
  {\path{doi:10.1007/s00453-010-9484-z}}.

\bibitem{MasudaNKF90}
S.~Masuda, K.~Nakajima, T.~Kashiwabara, and T.~Fujisawa.
\newblock Crossing minimization in linear embeddings of graphs.
\newblock {\em {IEEE} Trans. Computers}, 39(1):124--127, 1990.
\newblock \href {https://doi.org/10.1109/12.46286}
  {\path{doi:10.1109/12.46286}}.

\bibitem{Nash10}
N.~Nash and D.~Gregg.
\newblock An output sensitive algorithm for computing a maximum independent set
  of a circle graph.
\newblock {\em Inf. Process. Lett.}, 110(16):630--634, 2010.
\newblock \href {https://doi.org/10.1016/j.ipl.2010.05.016}
  {\path{doi:10.1016/j.ipl.2010.05.016}}.

\bibitem{NishimuraRT05}
N.~Nishimura, P.~Ragde, and D.~M. Thilikos.
\newblock Fast fixed-parameter tractable algorithms for nontrivial
  generalizations of vertex cover.
\newblock {\em Discret. Appl. Math.}, 152(1-3):229--245, 2005.
\newblock \href {https://doi.org/10.1016/j.dam.2005.02.029}
  {\path{doi:10.1016/j.dam.2005.02.029}}.

\bibitem{ReedSV04}
B.~A. Reed, K.~Smith, and A.~Vetta.
\newblock Finding odd cycle transversals.
\newblock {\em Oper. Res. Lett.}, 32(4):299--301, 2004.
\newblock \href {https://doi.org/10.1016/J.ORL.2003.10.009}
  {\path{doi:10.1016/J.ORL.2003.10.009}}.

\bibitem{stt-mvuhss-TSMC81}
K.~Sugiyama, S.~Tagawa, and M.~Toda.
\newblock Methods for visual understanding of hierarchical system structures.
\newblock {\em IEEE Trans. Syst. Man Cybernetics}, 11(2):109--125, 1981.
\newblock \href {https://doi.org/10.1109/TSMC.1981.4308636}
  {\path{doi:10.1109/TSMC.1981.4308636}}.

\bibitem{Unger88}
W.~Unger.
\newblock On the $k$-colouring of circle-graphs.
\newblock In R.~Cori and M.~Wirsing, editors, {\em Proc. 5th Ann. Symp.
  Theoret. Aspects Comput. Sci. (STACS)}, volume 294 of {\em LNCS}, pages
  61--72. Springer, 1988.
\newblock \href {https://doi.org/10.1007/BFb0035832}
  {\path{doi:10.1007/BFb0035832}}.

\bibitem{Unger92}
W.~Unger.
\newblock The complexity of colouring circle graphs.
\newblock In A.~Finkel and M.~Jantzen, editors, {\em Proc. 9th Ann. Symp.
  Theoret. Aspects Comput. Sci. (STACS)}, volume 577 of {\em LNCS}, pages
  389--400. Springer, 1992.
\newblock \href {https://doi.org/10.1007/3-540-55210-3\_199}
  {\path{doi:10.1007/3-540-55210-3\_199}}.

\bibitem{Valiente03}
G.~Valiente.
\newblock A new simple algorithm for the maximum-weight independent set problem
  on circle graphs.
\newblock In T.~Ibaraki, N.~Katoh, and H.~Ono, editors, {\em Proc. Int. Symp.
  Algorithms Comput. (ISAAC)}, volume 2906 of {\em LNCS}, pages 129--137.
  Springer, 2003.
\newblock \href {https://doi.org/10.1007/978-3-540-24587-2_15}
  {\path{doi:10.1007/978-3-540-24587-2_15}}.

\bibitem{Xiao17}
M.~Xiao.
\newblock On a generalization of {Nemhauser and Trotter}'s local optimization
  theorem.
\newblock {\em J. Comput. Syst. Sci.}, 84:97--106, 2017.
\newblock \href {https://doi.org/10.1016/j.jcss.2016.08.003}
  {\path{doi:10.1016/j.jcss.2016.08.003}}.

\bibitem{z-pacmp-CSR22}
M.~Zehavi.
\newblock Parameterized analysis and crossing minimization problems.
\newblock {\em Comput. Sci. Rev.}, 45:100490, 2022.
\newblock \href {https://doi.org/10.1016/j.cosrev.2022.100490}
  {\path{doi:10.1016/j.cosrev.2022.100490}}.

\end{thebibliography}

\end{document}